\newcommand{\procone}{P}
\newcommand{\proctwo}{Q}
\newcommand{\procthree}{R}
\newcommand{\procfour}{S}
\newcommand{\procfive}{T}
\newcommand{\varone}{x}
\newcommand{\vartwo}{y}
\newcommand{\varthree}{z}
\newcommand{\valueone}{V}
\newcommand{\valuetwo}{W}
\newcommand{\valuethree}{Z}
\newcommand{\chanone}{a}
\newcommand{\chantwo}{b}
\newcommand{\chanthree}{c}
\newcommand{\conone}{\Gamma}
\newcommand{\contwo}{\Delta}
\newcommand{\conthree}{\Lambda}
\newcommand{\confour}{\Theta}
\newcommand{\emcon}{\emptyset}
\newcommand{\emproc}{\textbf{0}}
\newcommand{\tdone}{\pi}
\newcommand{\tdtwo}{\rho}
\newcommand{\tdthree}{\sigma}
\newcommand{\inp}[3]{#1( #2).#3}
\newcommand{\einp}[3]{#1(!#2).#3}
\newcommand{\sinp}[3]{#1(\Box#2).#3}
\newcommand{\out}[3]{\overline{#1}\langle #2 \rangle.#3}
\newcommand{\outwc}[2]{\overline{#1}\langle #2 \rangle}
\newcommand{\para}[2]{#1\;||\;#2}
\newcommand{\rest}[2]{(\nu #1)#2}
\newcommand{\abstr}[2]{\lambda #1.#2}
\newcommand{\eabstr}[2]{\lambda\! !#1.#2}
\newcommand{\sabstr}[2]{\lambda\! \Box#1.#2}
\newcommand{\app}[2]{#1#2}
\newcommand{\bang}[1]{!#1}
\newcommand{\sang}[1]{\Box#1}
\newcommand{\dang}[1]{\Diamond#1}
\newcommand{\dies}[1]{\# #1}
\newcommand{\nfo}[2]{\mathbb{FO}(#1,#2)}
\newcommand{\bd}[1]{\mathbb{B}(#1)}
\newcommand{\df}[1]{\mathbb{D}(#1)}
\newcommand{\weip}[2]{\mathbb{W}_{#2}(#1)}
\newcommand{\wei}[1]{\mathbb{W}(#1)}
\newcommand{\webip}[2]{\mathbb{I}_{#2}(#1)}
\newcommand{\webi}[1]{\mathbb{I}(#1)}
\newcommand{\pgrp}[2]{\mathbb{P}_{#2}(#1)}
\newcommand{\pgr}[1]{\mathbb{P}(#1)}
\newcommand{\subst}[3]{#1[#2/#3]}
\newcommand{\os}[2]{#1\rightarrow_\mathsf{P} #2}
\newcommand{\osrel}{\rightarrow_\mathsf{P}}
\newcommand{\osl}[2]{#1\rightarrow_\mathsf{L} #2}
\newcommand{\oslp}[3]{#1\rightarrow_\mathsf{L}^{#2} #3}
\newcommand{\wfjv}[2]{#1\vdash_\mathsf{V} #2}
\newcommand{\wfjp}[2]{#1\vdash_\mathsf{P} #2}
\newcommand{\wfjsv}[2]{#1\vdash_\mathsf{SV} #2}
\newcommand{\wfjsp}[2]{#1\vdash_\mathsf{SP} #2}
\newcommand{\wfjev}[2]{#1\vdash_\mathsf{EV} #2}
\newcommand{\wfjep}[2]{#1\vdash_\mathsf{EP} #2}
\newcommand{\congr}[2]{#1\equiv #2}
\newcommand{\congrs}{\equiv}
\newcommand{\embv}[1]{[#1]_\mathsf{V}}
\newcommand{\embp}[1]{[#1]_\mathsf{P}}
\newcommand{\server}{\mathit{SERVER}}
\newcommand{\lserver}{\mathit{SERVER}_!}
\newcommand{\elserver}{\mathit{SERVER}_\Box}
\newcommand{\compon}{\mathit{COMP}}
\newcommand{\lcompon}{\mathit{COMP}_!}
\newcommand{\elcompon}{\mathit{COMP}_\Box}
\newcommand{\procomega}{\mathit{OMEGA}}
\newcommand{\lprocomega}{\mathit{OMEGA}_!}
\newcommand{\procdelta}{\mathit{DELTA}}
\newcommand{\lprocdelta}{\mathit{DELTA}_!}
\newcommand{\chans}{\mathscr{C}}
\newcommand{\inpchans}{\mathscr{IC}}
\newcommand{\size}[1]{|#1|}
\newcommand{\hopi}{$\mathbf{HO}\pi$}
\newcommand{\hopisw}{$\mathbf{HO}\pi^{\mathtt{unit},\rightarrow,\diamond}$}
\newcommand{\lhopi}{$\mathbf{LHO}\pi$}
\newcommand{\shopi}{$\mathbf{SHO}\pi$}
\newcommand{\eshopi}[1]{$\mathbf{EHO}\pi(#1)$}
\newcommand{\SLL}{\textbf{SLL}}
\newtheorem{lemma}{Lemma}
\newtheorem{proposition}{Proposition}
\newtheorem{theorem}{Theorem}
\newenvironment{proof}{\begin{trivlist}
       \item[\hskip \labelsep {\bfseries Proof.}]}{\hfill $\Box$ \end{trivlist}}
\newenvironment{varitemize}
{
\begin{list}{\labelitemi}
{\setlength{\itemsep}{0.0mm}
 \setlength{\topsep}{0.0mm}
 \setlength{\parindent}{0.0mm}
 \setlength{\parskip}{0.0mm}
 \setlength{\parsep}{0.0mm}
 \setlength{\partopsep}{0.0mm}
 \setlength{\leftmargin}{15pt}
 \setlength{\labelsep}{5pt}
 \setlength{\labelwidth}{10pt}}}
{
 \end{list} 
}
\newcounter{number}
\newenvironment{varenumerate}
{\begin{list}{\arabic{number}.}
  {
   \usecounter{number}
   \setlength{\labelwidth}{4.0mm}
   \setlength{\labelsep}{2.0mm}
   \setlength{\itemindent}{0.0mm}
   \setlength{\itemsep}{0.0mm}
   \setlength{\topsep}{0.0mm}
   \setlength{\parskip}{0.0mm}
   \setlength{\parsep}{0.0mm}
   \setlength{\partopsep}{0.0mm}
  }
}
{\end{list}}
\title{Light Logics and Higher-Order Processes}
\author{
Ugo Dal Lago
\institute{Universit\`a di Bologna \\ INRIA Sophia Antipolis}
\email{dallago@cs.unibo.it}
\and
Simone Martini
\institute{Universit\`a di Bologna \\ INRIA Sophia Antipolis}
\email{martini@cs.unibo.it}
\and
Davide Sangiorgi
\institute{Universit\`a di Bologna \\ INRIA Sophia Antipolis}
\email{sangio@cs.unibo.it}}
\begin{document}
\maketitle
\begin{abstract}
\noindent
We show that the techniques for resource control that have been developed in
the so-called ``light logics'' can be fruitfully applied also
to process algebras. In particular, we present a restriction
of Higher-Order $\pi$-calculus inspired by Soft Linear Logic.
We prove that any soft process terminates in polynomial time. We argue that the class of soft processes may be naturally
enlarged so that interesting processes are expressible, still maintaining
the polynomial bound on executions.
\end{abstract}
\section{Introduction}
A term terminates if all its reduction sequences are of finite length.
As far as programming languages are concerned, termination means that
computation in programs will eventually stop.  In computer science,
termination has been extensively investigated in sequential languages,
where strong normalization is a synonym more commonly used.

Termination  is however interesting  also in concurrency.  
While large concurrent systems often are  supposed to run forever
(e.g., an operating system, or the Internet itself), single components are
usually expected to terminate.  For instance, if we query a
server, we may want to know that the server does not go on forever
trying to compute an answer.  Similarly, when we load an applet we
would like to know that the applet will not run forever on our
machine, possibly absorbing all the computing resources.  In general,
if the lifetime of a process can be infinite, we may want to know that
the process does not remain alive simply because of nonterminating
internal activity, and that, therefore, the process will eventually
accept interactions with the environment.  

Another motivation for studying termination in concurrency is to
exploit it within techniques aimed at guaranteeing properties such as
responsiveness and lock-freedom \cite{KobayashiS08}, which
{intuitively} indicate that certain communications or synchronizations
will eventually succeed (possibly under some fairness assumption). 
In message-passing languages such  as those in the
$\pi$-calculus family (Join Calculus, Higher-Order $\pi$-calculus, Asynchronous
$\pi$-calculus, etc.) 
most liveness properties can be reduced to instances of lock-freedom.
Examples, in  a client-server system, 
are the liveness properties that 
 a client  request will 
 eventually be received   by  the
 server, or that  a 
 server, once accepted a request, will  eventually send back an answer.

However, termination alone may not be satisfactory.  If a query to a
server produces a computation that terminates after a very long time,
from the client point of view this may be the same as a
nonterminating (or failed) computation.  Similarly, an applet loaded
on our machine that starts a very long computation, may engender an
unacceptable consumption of local resources, and may possibly be
considered a ``denial of service'' attack.  In other words, without
precise bounds on the time to complete a computation, termination may 
be indistinguishable from nontermination.

Type disciplines are among the most general techniques to ensure termination of programs.
Both in the sequential and in the concurrent case, type systems have been
designed to characterize classes of terminating programs. 
It is interesting that, from the fact that a program has a type, we may often extract information on the structure
of the program itself (e.g., for the simple types, the program has no
self applications). If termination (or, more generally, some property of
the computation) is the main interest, it is only this structure that matters,
and not the specifics of the types. In this paper we take
this perspective, and apply to a certain class of programs (Higher-Order
$\pi$-calculus terms) the structural restrictions suggested by the types of Soft
Linear Logic~\cite{Lafont:TCS01}, a fragment of Linear Logic~\cite{Gir87} 
characterizing polynomial time computations. 

Essential contribution of Linear Logic has been the \emph{refinement} it allows 
on the analysis of computation. The (previously 
atomic) step of function application is decomposed into a duplication phase
(during which the argument is duplicated the exact number of times it will
be needed during the computation), followed by the application of a \emph{linear}
function (which will use each argument exactly once). The emphasis here
is not on restricting the class of programs---in many cases, any traditional
program (e.g., any $\lambda$-term, even a divergent one) could be annotated 
with suitable \emph{scope information} (\emph{boxes}, in the
jargon) in such a way that the annotated program behaves as the original one.
However, the new annotations embed information on
the computational behavior that was unexpressed (and inexpressible) before.
In particular, boxes delimit those parts of data that will be (or may be) duplicated or erased
during computation.

It is at this stage that one may apply \emph{restrictions}. By building on the 
scopes exposed in the new syntax, 
we may restrict the computational behavior of a term. In the sequential case
several achievements have been obtained via the so-called 
\emph{light logics}~\cite{girard98light,AspertiRoversi:TOCL02,Lafont:TCS01}, which allow for type systems for $\lambda$-calculus
exactly characterizing several complexity classes (notably,
elementary time, polynomial type, polynomial space, logarithmic space).
This is obtained by limitations on the way the scopes (boxes) may be manipulated.
For the larger complexity classes (e.g., elementary time) one forbids that 
during computation  one
scope may enter inside another scope (their nesting depth
remains constant). For smaller classes (e.g., polynomial time) one also forbids
that a duplicating computation could drive another duplication.
The exact way this is obtained depends on the particular discipline (either
\`a la Light Linear Logic, or \`a la Soft Linear Logic).

The aim of this paper is to apply for the first time
these technologies to the concurrent case, in particular
to Higher-Order $\pi$-calculus~\cite{SangiorgiWalker}. We closely follow the
pattern we have delineated above. First, we introduce (higher-order) processes,
which we then annotate with explicit scopes, where the new construct ``!'' 
marks duplicable entities. This is indeed a refinement, and not a restriction --- 
any process in the first calculus may be simulated by an annotated one. 
We then introduce our main object of study --- annotated processes restricted 
with the techniques of Soft Linear Logic. We show that the number of internal
actions performed by processes of this calculus is polynomially bounded (Section~\ref{sect:shopi}), a
property that we call \emph{feasible termination}. Moreover,
an extension of the calculus capturing a natural example will be presented (Section~\ref{sect:eshopi}).

We stress that we used in the paper a pragmatic approach --- take from the logical side
tools and techniques that may be suitable to obtain general bounds
on the computing time of processes. We are not looking for
a general relation between logical systems and process algebras that could
realize a form of Curry-Howard correspondence among the two. That would be
a much more ambitious goal, for which other techniques --- and different
success criteria --- should be used.

\paragraph{Related Work}
A number of works have recently studied type systems that ensure
termination in mobile
processes, e.g.\  \cite{YBH11,DemangeonHS09,DemangeonHS10}.
They are  quite different from the present paper.
First, the techniques employed are measure-based techniques, or
logical relations, or combinations of these, rather than techniques
inspired by linear logics, as done here. Secondly, the objective is pure
termination, whereas here we aim at deriving polynomial bounds on the number of
steps that lead to termination. (In some of the measure-based  systems
bounds can actually be derived, but they are usually exponential
with respect to
integer annotations that appear in the types.) Thirdly,
with the exception of \cite{DemangeonHS10}, all works analyse
name-passing calculi such as the $\pi$-calculus, whereas here we consider
higher-order calculi in which terms of the calculus
are exchanged  instead of names.

Linear Logic has been applied to mobile processes by
Ehrhard and Laurent \cite{EL10}, who have studied encodings of
$\pi$-calculus-like languages into Differential Interaction Nets
\cite{EhrhardR06},  an
extension of the Multiplicative Exponential fragment of Linear Logic.
The encodings are meant to be tests for the expressiveness of
Differential Interaction Nets; the issue of termination does not
arise, as the process calculi encoded are finitary.
Amadio and Dabrowski \cite{AmadioD07}  have applied ideas from term rewriting
to a $\pi$-calculus enriched with synchronous constructs \`a la
Esterel. Computation in processes proceeds synchronously,
divided into cycles called instants.
A static analysis and a
finite-control condition
 guarantee
that, during each instant, the size of a program and
the times it takes to complete the instant
are   polynomial on the size of the program and
the input values at the beginning of the
instant.

\section{Higher-Order Processes}\label{sect:hopi}
This section introduces the syntax and the operational semantics of
processes. We call \hopi\ the calculus of processes we are going to define
(it is  the calculus \hopisw\ in~\cite{SangiorgiWalker}).
In \hopi\ the values exchanged in interactions can be first-order
values and
higher-order values, i.e., terms containing processes. 
For economy, the only first-order value  employed is the unit value
$\star$, and the only higher-order values are parametrised processes,
called abstractions (thus we forbid direct communication of processes;
to communicate a process we must add a dummy parameter to it).
The process constructs are nil, parallel composition, input, output, 
restriction, and application.  Application is the destructor for
abstraction: it allows us to instantiate  the formal  parameters of an
abstraction.
Here is the complete grammar:
\begin{align*}
\procone&::=\emproc\mid\para{\procone}{\procone}\mid\inp{\chanone}{\varone}{\procone}\mid
  \out{\chanone}{\valueone}{\procone}\mid\rest{\chanone}{\procone}\mid\app{\valueone}{\valueone};\\
\valueone&::= \star\mid \varone\mid \abstr{\varone}{\procone};
\end{align*}
where $\chanone$ ranges over a denumerable set $\chans$ of channels, and
$ \varone$ over the  denumerable set  of variables.
Input, restriction, and abstractions are binding constructs, and give
rise in the expected way to the notions of free and bound channels and
of free and bound
variables, as well as of $\alpha$-conversion.

Ill-formed terms such as
$\app{\star}{\star}$ can be avoided by means of a type systems. 
The details are standard and are omitted here;
see~\cite{SangiorgiWalker}. 

The operational semantics, in the reduction style, is presented in
Figure~\ref{fig:oshopi}, and uses the auxiliary relation of 
\emph{structural congruence}, written $\equiv$.  This is the smallest
congruence closed under the following rules:
\begin{align*}
\procone&\congrs\proctwo\mbox{ if $\procone$ and $\proctwo$ are $\alpha$-equivalent};\\
\para{\procone}{(\para{\proctwo}{\procthree})}&\congrs\para{(\para{\procone}{\proctwo})}{\procthree};\\
\para{\procone}{\proctwo}&\congrs\para{\proctwo}{\procone};\\
\rest{\chanone}{(\rest{\chantwo}{\procone})}&\congrs\rest{\chantwo}{(\rest{\chanone}{\procone})};\\
(\rest{\chanone}{\para{\procone}{\proctwo}})&\congrs\para{(\rest{\chanone}{\procone})}{\proctwo}\mbox{
  if  $\chanone$ is not free in ${\proctwo}$};
\end{align*}
Unlike other presentations of structural congruence, we disallow the
garbage-collection laws
$\para{\procone}{\emproc}\congrs\procone$ and
$\rest{\chanone}{\emproc}\congrs\chanone$, which are troublesome for
our resource-sensitive analysis.
The reduction relation is written $\osrel$, and is defined on
processes without free variables. 

\begin{figure}
\begin{center}
\fbox{\begin{minipage}[c]{.97\textwidth}
\vspace{12pt}
$$
\begin{array}{ccc}
\infer
{\os{\para{\out{\chanone}{\valueone}{\procone}}{\inp{\chanone}{\varone}{\proctwo}}}{\para{\procone}{\subst{\proctwo}{\varone}{\valueone}}}}
{}
&
\hspace{10pt}
&
\infer
{\os{\app{(\abstr{\varone}{\procone})}{\valueone}}{\subst{\procone}{\varone}{\valueone}}}
{}
\end{array}
$$
\vspace{2pt}
$$
\begin{array}{ccccc}
\infer
{\os{\para{\procone}{\procthree}}{\para{\proctwo}{\procthree}}}
{\os{\procone}{\proctwo}}
&
\hspace{10pt}
&
\infer
{\os{\rest{\chanone}{\procone}}{\rest{\chanone}{\proctwo}}}
{\os{\procone}{\proctwo}}
&
\hspace{10pt}
&
\infer
{\os{\procone}{\procfour}}
{ \congr{\procone}{\proctwo} & \os{\proctwo}{\procthree} & \congr{\procthree}{\procfour} }
\end{array}
$$
\end{minipage}}
\end{center}
\caption{The operational semantics of \hopi\ processes.}\label{fig:oshopi}
\end{figure}

In general, the
relation $\osrel$ is nonterminating. The prototypical example
of a nonterminating process is the following process $\procomega$:
$$ 
\procomega=\rest{\chanone}{(\para{\app\procdelta \star}{\outwc{\chanone}{\procdelta}})},
\qquad\mbox{where}\qquad\procdelta=
\abstr{\vartwo}{(\inp{\chanone}\varone{(\para{{\app{\varone\:}{\,\star}}}{\outwc{\chanone}{\varone}})})}.
$$
Indeed, it holds that $\procomega\osrel^2\procomega$.
Variants of the construction  employed for $\procomega$ can be used to show
that process  recursion can be modelled in \hopi. 
An example of this construction is  the following  $\server$ process.
It accepts a request $\vartwo$ on channel $\chantwo$ and forwards it
along $\chanthree$. After that, it can handle another
request from $\chantwo$. In contrast to $\procomega$,  $\server$ is terminating, because
there is no infinite reduction sequence starting from $\server$. Yet hand, the number of
requests $\server$ can handle is unlimited, i.e., $\server$ can be
engaged in an infinite sequence of interactions 
with its environment.
\begin{align*}
\server&=\rest{\chanone}{(\para{\app \compon\star}{\outwc{\chanone}{\compon}})};\\
\compon&=
\abstr \varthree {( 
\inp{\chanone}{\varone}{(\para{\inp{\chantwo}{\vartwo}
    {\out{\chanthree}{\vartwo}{\outwc{\chanone}{\varone}}}} 
    {\app{\varone}{\star}})})}.
\end{align*}

A remark on notation: in this paper, $!$ is the Linear Logic operator
(more precisely, an operator derived from  Linear Logic), and should
not be confused with  the replication operator often used in process
calculi such as the $\pi$-calculus. 

\section{Linearizing Processes}\label{sect:lhopi}
Linear Logic can be seen as a way to decompose the type of functions
$A\rightarrow B$ into a refined type $!A\multimap B$.
Since the argument (in $A$) may be used several (or zero)
times to compute the result in $B$, we first turn the input 
into a  duplicable
(and erasable) object (of type $!A$). We now duplicate (or erase) it the number of times
it is needed, and finally we use each of the copies exactly once to
obtain the result (this is the linear function space $\multimap$). The richer language
of types (with the new constructors $!$ and $\multimap$) is matched by new term constructs, whose goal is to 
explicitly enclose in marked scopes (boxes) those subterms that may be erased or duplicated. In the computational process
we described above, there are three main ingredients: (i) the mark on a duplicable/erasable entity; (ii) its actual duplication/erasure; (iii) the linear use of the copies. 
For reasons that cannot be discussed here (see Wadler's~\cite{Wad93} for the
notation we will use) we may adopt
a syntax where the second step (duplication) is not made fully explicit (thus resulting
in a simpler language), and where the crucial distinction is made
between linear functions (denoted by the usual syntax $\abstr{\varone}{\procone}$ --- but
interpreted in a strictly linear way: $x$ occurs once in $P$), and nonlinear functions, denoted with
$\eabstr{\varone}{\procone}$, where the $x$ may occur several (or zero) times in $\procone$.
When a nonlinear function is applied, its actual argument will be duplicated or erased. 
We enclose the argument in a box to record this fact, using an eponymous unary operator
$!$ also on terms. Since we want to control the computational behavior of duplicable entities,
a term in a !-box is protected and cannot be reduced. Only when it will be fed to
a (nonlinear) function, and thus (transparently) duplicated, its box will be opened (the mark ! disappears)
and the content will be reduced.

The constructs on \emph{terms} arising from Linear Logic have a natural counterpart in higher-order
processes, where communication and abstraction play a similar role.
This section introduces a linearization of \hopi, that we here
dub \lhopi. The grammars of processes and values are as follows:
\begin{align*}
\procone&::=\emproc\mid\para{\procone}{\procone}\mid\inp{\chanone}{\varone}{\procone}\mid\einp{\chanone}{\varone}{\procone}\mid
  \out{\chanone}{\valueone}{\procone}\mid\rest{\chanone}{\procone}\mid\app{\valueone}{\valueone};\\
\valueone&::=\star\mid\varone\mid\abstr{\varone}{\procone}\mid\eabstr{\varone}{\procone}\mid\;\bang{\valueone}.
\end{align*}
On top of the grammar, we must enforce the linearity constraints, which are expressed by the rules in Figure~\ref{fig:lhopiwff}.
\begin{figure}
\begin{center}
\fbox{\begin{minipage}[c]{.97\textwidth}
\vspace{12pt}
$$
\begin{array}{ccccc}
\infer
{\wfjp{\bang{\conone}}{\emproc}}
{}
&
\hspace{10pt}
&
\infer
{\wfjp{\conone,\contwo,\bang{\conthree}}{\para{\procone}{\proctwo}}}
{\wfjp{\conone,\bang{\conthree}}{\procone} & \wfjp{\contwo,\bang{\conthree}}{\proctwo}}
&
\hspace{10pt}
&
\infer
{\wfjp{\conone}{\inp{\chanone}{\varone}{\procone}}}
{\wfjp{\conone,\varone}{\procone}}
\end{array}
$$
\vspace{2pt}
$$
\begin{array}{ccccc}
\infer
{\wfjp{\conone}{\einp{\chanone}{\varone}{\procone}}}
{\wfjp{\conone,\bang{\varone}}{\procone}}
&
\hspace{10pt}
&
\infer
{\wfjp{\conone,\contwo,\bang{\conthree}}{\out{\chanone}{\valueone}{\procone}}}
{\wfjv{\conone,\bang{\conthree}}{\valueone} & \wfjp{\contwo,\bang{\conthree}}{\procone}}
&
\hspace{10pt}
&
\infer
{\wfjp{\conone}{\rest{\chanone}{\procone}}}
{\wfjp{\conone}{\procone}}
\end{array}
$$
\vspace{2pt}
$$
\begin{array}{ccccc}
\infer
{\wfjp{\conone,\contwo,\bang{\conthree}}{\app{\valueone}{\valuetwo}}}
{\wfjv{\conone,\bang{\conthree}}{\valueone} & \wfjv{\contwo,\bang{\conthree}}{\valuetwo}}
&
\hspace{10pt}
&
\infer
{\wfjv{\bang{\conone}}{\star}}
{}
&
\hspace{10pt}
&
\infer
{\wfjv{\bang{\conone},\varone}{\varone}}
{}
\end{array}
$$
\vspace{2pt}
$$
\begin{array}{ccccccc}
\infer
{\wfjv{\bang{\conone},\bang{\varone}}{\varone}}
{}
&
\hspace{10pt}
&
\infer
{\wfjv{\conone}{\abstr{\varone}{\procone}}}
{\wfjp{\conone,\varone}{\procone}}
&
\hspace{10pt}
&
\infer
{\wfjv{\conone}{\eabstr{\varone}{\procone}}}
{\wfjp{\conone,\bang{\varone}}{\procone}}
&
\hspace{10pt}
&
\infer
{\wfjv{\bang{\conone}}{\bang{\valueone}}}
{\wfjv{\bang{\conone}}{\valueone}}
\end{array}
$$
\vspace{2pt}
\end{minipage}}
\end{center}
\caption{Processes and values in \lhopi.}\label{fig:lhopiwff}
\end{figure}
They prove judgements in the form
$\wfjp{\conone}{\procone}$ and $\wfjv{\conone}{\valueone}$, where
$\conone$ is a \emph{context} consisting of a finite set of variables --- a single 
variable may appear in $\conone$ either as $x$ or as $!x$, but not both. 
Examples of contexts are $\varone,\bang{\vartwo}$; or $\varone,\vartwo,\varthree$; or the empty context $\emcon$. 
As usual, we write $!\conone$ when all variables of the context (if any) are !-marked. A process $\procone$ (respectively, a value $\valueone$) is \emph{well-formed} iff there is a context $\conone$ such that
$\wfjp{\conone}{\procone}$ (respectively, $\wfjv{\conone}{\valueone}$). 
In the rules with two premises, observe the implicit contractions on $!$-marked variables in the 
context --- they allow for transparent duplication. The \emph{depth} of a (occurrence of a) variable $\varone$ in a process
or value is the number of instances of the $!$ operator it is enclosed to. As an example, if $\procone=\app{(\bang{\varone})}{(\vartwo)}$,
then $\varone$ has depth $1$, while $\vartwo$ has depth $0$.

A judgement $\wfjp{\conone}{\procone}$ can informally be interpreted as follows.
Any variable  appearing as $\varone$ in $\conone$ must occur free exactly
once in $\procone$; moreover the only occurrence of $\varone$ is at depth
$0$ in $\procone$ (that is, it is not in the scope of any !). On the other hand, any variable
$\vartwo$ appearing as $\bang{\vartwo}$ in $\conone$  may occur free any
number of times in $\procone$, at any depth.
Variables like $\varone$ are  \emph{linear}, while those like $\vartwo$ are
\emph{nonlinear}. Nonlinear variables may only be bound by nonlinear 
binders (which have a $!$ to recall this fact). 

The operational semantics of \lhopi\ is a slight
variation on the one of \hopi, and can be found in Figure~\ref{fig:oslhopi}.
\begin{figure}
\begin{center}
\fbox{\begin{minipage}[c]{.97\textwidth}
\vspace{12pt}
$$
\begin{array}{ccc}
\infer
{\osl{\para{\out{\chanone}{\valueone}{\procone}}{\inp{\chanone}{\varone}{\proctwo}}}{\para{\procone}{\subst{\proctwo}{\varone}{\valueone}}}}
{}
&
\hspace{10pt}
&
\infer
{\osl{\app{(\abstr{\varone}{\procone})}{\valueone}}{\subst{\procone}{\varone}{\valueone}}}
{}
\end{array}
$$
\vspace{2pt}
$$
\begin{array}{ccc}
\infer
{\osl{\para{\out{\chanone}{\bang{\valueone}}{\procone}}{\einp{\chanone}{\varone}{\proctwo}}}
{\para{\procone}{\subst{\proctwo}{\varone}{\valueone}}}}
{}
&
\hspace{10pt}
&
\infer
{\osl{\app{(\eabstr{\varone}{\procone})}{\bang{\valueone}}}{\subst{\procone}{\varone}{\valueone}}}
{}
\end{array}
$$
\vspace{2pt}
$$
\begin{array}{ccccc}
\infer
{\osl{\para{\procone}{\procthree}}{\para{\proctwo}{\procthree}}}
{\osl{\procone}{\proctwo}}
&
\hspace{10pt}
&
\infer
{\osl{\rest{\chanone}{\procone}}{\rest{\chanone}{\proctwo}}}
{\osl{\procone}{\proctwo}}
&
\hspace{10pt}
&
\infer
{\osl{\procone}{\procfour}}
{ \congr{\procone}{\proctwo} & \osl{\proctwo}{\procthree} & \congr{\procthree}{\procfour} }
\end{array}
$$
\end{minipage}}
\end{center}
\caption{The operational semantics of \lhopi\ processes.}\label{fig:oslhopi}
\end{figure}
The two versions of communication and abstraction (i.e., the linear and the
nonlinear one) are governed by two distinct rules. In the nonlinear
case the argument to the function (or the value sent through a channel)
must be in the correct duplicable form $\bang{\valueone}$. Well-formation is preserved
by reduction:
\begin{lemma}[Subject Reduction]
If $\wfjp{\;}{\procone}$ and $\osl{\procone}{\proctwo}$, then $\wfjp{\;}{\proctwo}$.
\end{lemma}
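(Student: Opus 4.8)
The plan is to prove Subject Reduction by induction on the derivation of $\osl{\procone}{\proctwo}$, following the structure of the operational semantics in Figure~\ref{fig:oslhopi}. The five rules split into two base cases (the two forms of communication and the two forms of $\beta$-like reduction — actually four axioms) and three inductive cases (closure under parallel composition, restriction, and structural congruence). For the inductive cases, the parallel and restriction rules are routine: one inverts the well-formation judgement $\wfjp{\;}{\procone}$ using the corresponding rule of Figure~\ref{fig:lhopiwff}, applies the induction hypothesis to the reducing subterm, and reassembles the derivation with the same rule. For the structural-congruence case one needs a separate lemma stating that $\congr{\procone}{\proctwo}$ implies $\wfjp{\conone}{\procone} \iff \wfjp{\conone}{\proctwo}$; this is proved by induction on the derivation of $\congrs$, checking each generating equation (associativity and commutativity of $||$, swapping of $\nu$'s, scope extrusion, and $\alpha$-conversion), and it is essentially straightforward because all the constructs involved commute cleanly with the context-splitting discipline — the $\bang{\conthree}$ shared component in the parallel rule behaves additively and poses no problem.

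The real content is in the four axiom cases, and these rest on a \emph{Substitution Lemma}, which I would isolate and prove first. The statement needed is roughly: if $\wfjp{\conone,\varone}{\procone}$ and $\wfjv{\contwo}{\valueone}$ with $\conone,\contwo$ compatible, then $\wfjp{\conone,\contwo}{\subst{\procone}{\varone}{\valueone}}$; and the $!$-marked variant: if $\wfjp{\conone,\bang{\varone}}{\procone}$ and $\wfjv{\bang{\contwo}}{\valueone}$, then $\wfjp{\conone,\bang{\contwo}}{\subst{\procone}{\varone}{\valueone}}$. Both are proved by induction on the derivation of the judgement for $\procone$ (with a mutual statement for values). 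Given the Substitution Lemma, each axiom case is handled by inversion: for $\osl{\app{(\abstr{\varone}{\procone})}{\valueone}}{\subst{\procone}{\varone}{\valueone}}$, invert the application rule to get $\wfjv{\conone,\bang{\conthree}}{\abstr{\varone}{\procone}}$ and $\wfjv{\contwo,\bang{\conthree}}{\valueone}$, then invert the $\lambda$-rule to get $\wfjp{\conone,\bang{\conthree},\varone}{\procone}$, and apply the linear Substitution Lemma. For the communication axiom $\para{\out{\chanone}{\valueone}{\procone}}{\inp{\chanone}{\varone}{\proctwo}}$, invert the parallel rule, then the output and input rules, and again apply substitution. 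The two nonlinear axioms are analogous but use the $!$-variant of the Substitution Lemma, and here the hypothesis that the transmitted value has the shape $\bang{\valueone}$ is exactly what lets the inversion of the $\bang{\valueone}$ typing rule produce a derivation of $\wfjv{\bang{\conone}}{\valueone}$, matching the premise of the nonlinear substitution.

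The main obstacle — and the step deserving genuine care rather than a hand-wave — is the $!$-marked case of the Substitution Lemma, because a nonlinear variable $\varone$ may occur \emph{any number of times and at any depth} in $\procone$, so the substitution must be threaded through an arbitrary number of nested $!$-boxes and through both premises of every two-premise rule simultaneously. The bookkeeping on contexts is delicate: the shared $\bang{\conthree}$ in the parallel/output/application rules means that when $\varone \in \bang{\conthree}$ one substitutes into \emph{both} subderivations, and one must check that the value $\valueone$, being typable in a fully $!$-marked context $\bang{\contwo}$, can indeed be placed under any number of $!$'s (this is precisely the role of the rule $\wfjv{\bang{\conone}}{\bang{\valueone}}$ from $\wfjv{\bang{\conone}}{\valueone}$) and can be freely contracted. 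One must also verify that substituting a value typable in $\bang{\contwo}$ never violates the invariant that linear variables occur exactly once at depth $0$: since $\valueone$ contains only $!$-marked free variables, it introduces no new linear variables, so the linear occurrences in $\subst{\procone}{\varone}{\valueone}$ are exactly those of $\procone$ minus none, preserving the count. Once this lemma is stated with the right context-compatibility hypotheses, the rest of the proof is mechanical.
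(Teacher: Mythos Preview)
The paper does not actually supply a proof of this lemma: it is stated as Lemma~1 and immediately followed by the subsection on the embedding $\embp{\cdot}$, with no accompanying proof environment. So there is nothing to compare against in the strict sense.

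That said, your sketch is the standard and correct route. Induction on the reduction derivation, a side lemma that well-formation is invariant under $\congrs$, and a two-clause Substitution Lemma (one for a linear hypothesis $\varone$, one for a nonlinear hypothesis $\bang{\varone}$ with the substituted value typable in a fully $!$-marked context) are exactly what is required, and your discussion of the delicate point---that in the $\bang{\varone}$ case the value must be pushed under arbitrarily many $!$-boxes and into both premises of every splitting rule, which is licensed precisely by the rule $\wfjv{\bang{\conone}}{\bang{\valueone}}$ from $\wfjv{\bang{\conone}}{\valueone}$ and by the fact that $\bang{\contwo}$ can be contracted freely---is accurate. One small remark: when you treat the axiom cases you are working from $\wfjp{\;}{\procone}$, i.e.\ the empty context, so the splittings you obtain by inversion are into empty linear parts and a shared $\bang{\conthree}$ that is also empty; this simplifies the bookkeeping slightly compared to the general open-term statement of the Substitution Lemma, but you do need the open-term version for the induction inside that lemma itself, as you note.
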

\subsection{Embedding Processes into Linear Processes}
Processes (and values) can be embedded into linear processes (and values) as follows:
\begin{align*}
\embv{\star}&=\star; & 
\embv{\abstr{\varone}{\proctwo}}&=\eabstr{\varone}{\embp{\procone}};\\
\embp{\emproc}&=\emproc; &
\embv{\varone}&=\varone;\\
\embp{\para{\procone}{\proctwo}}&=\para{\embp{\procone}}{\embp{\proctwo}}; & 
\embp{\inp{\chanone}{\varone}{\procone}}&=\einp{\chanone}{\varone}{\embp{\procone}};\\
\embp{\out{\chanone}{\valueone}{\procone}}&=\out{\chanone}{\bang{\embv{\valueone}}}{\embp{\procone}}; & 
\embp{\rest{\chanone}{\procone}}&=\rest{\chanone}{\embp{\procone}};\\
\embp{\app{\valueone}{\valuetwo}}&=\app{\embv{\valueone}}{\bang{\embv{\valuetwo}}}.
\end{align*}
Linear abstractions and linear inputs never appear in processes
obtained via $\embp{\cdot}$: whenever a value is sent through a channel or passed to
a function, it is made duplicable. The embedding induces a simulation of processes by linear processes:
\begin{proposition}[Simulation]\label{prop:simulation}
For every process $\procone$, $\embp{\procone}$ is well-formed. Moreover,
if $\os{\procone}{\proctwo}$, then $\osl{\embp{\procone}}{\embp{\proctwo}}$.
\end{proposition}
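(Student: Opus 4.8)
The plan is to prove the two assertions of Proposition~\ref{prop:simulation} separately, each by a straightforward structural induction, with the first (well-formation) used as a lemma inside the second.

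\textbf{Well-formation.} First I would prove, by simultaneous induction on the structure of $\procone$ and $\valueone$, the strengthened statement: for every process $\procone$ we have $\wfjp{\bang{\conone}}{\embp{\procone}}$ where $\bang{\conone}$ consists of the free variables of $\procone$, each $!$-marked; and similarly $\wfjv{\bang{\conone}}{\embv{\valueone}}$ for values. The strengthening to an all-$!$ context is the natural invariant, because the embedding boxes every value that is communicated or applied, so linear variables never arise. Each case is routine: for instance, $\embp{\out{\chanone}{\valueone}{\procone}} = \out{\chanone}{\bang{\embv{\valueone}}}{\embp{\procone}}$ is handled by the output rule together with the rule $\infer{\wfjv{\bang{\conone}}{\bang{\valueone}}}{\wfjv{\bang{\conone}}{\valueone}}$ applied to the induction hypothesis for $\valueone$, and the implicit contraction on $!$-marked variables in the two-premise rules lets the shared free variables of $\valueone$ and $\procone$ be merged. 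The variable case uses $\infer{\wfjv{\bang{\conone},\bang{\varone}}{\varone}}{}$. The abstraction case $\embv{\abstr{\varone}{\procone}} = \eabstr{\varone}{\embp{\procone}}$ uses the rule for $\eabstr{\varone}{\cdot}$, which expects $\bang{\varone}$ in the context --- consistent with the invariant.

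\textbf{Simulation of one reduction step.} For the second part I would argue by induction on the derivation of $\os{\procone}{\proctwo}$, following the inference rules of Figure~\ref{fig:oshopi}. The two base cases are the interesting ones. For communication, $\embp{\para{\out{\chanone}{\valueone}{\procone}}{\inp{\chanone}{\varone}{\proctwo}}} = \para{\out{\chanone}{\bang{\embv{\valueone}}}{\embp{\procone}}}{\einp{\chanone}{\varone}{\embp{\proctwo}}}$, and since the transmitted value is in the duplicable form $\bang{\embv{\valueone}}$, the nonlinear communication rule of Figure~\ref{fig:oslhopi} fires and yields $\para{\embp{\procone}}{\subst{\embp{\proctwo}}{\varone}{\embv{\valueone}}}$; it then remains to check the substitution lemma $\subst{\embp{\proctwo}}{\varone}{\embv{\valueone}} = \embp{\subst{\proctwo}{\varone}{\valueone}}$, which I would state and prove by a separate easy induction on $\proctwo$ (and on values), the only mildly delicate clause being the output/application clauses where a $\bang{\cdot}$ wraps the substituted value, and $\embv{\subst{\valuetwo}{\varone}{\valueone}} = \subst{\embv{\valuetwo}}{\varone}{\embv{\valueone}}$ for values. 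The $\beta$-reduction base case $\os{\app{(\abstr{\varone}{\procone})}{\valueone}}{\subst{\procone}{\varone}{\valueone}}$ is analogous: $\embp{\app{(\abstr{\varone}{\procone})}{\valueone}} = \app{(\eabstr{\varone}{\embp{\procone}})}{\bang{\embv{\valueone}}}$, the nonlinear application rule applies, and the substitution lemma closes the case. The three inductive rules (closure under parallel composition, restriction, and structural congruence) are immediate once one notes that $\embp{\cdot}$ is a homomorphism for $\para{\cdot}{\cdot}$ and $\rest{\chanone}{\cdot}$ and commutes with $\equiv$ --- the latter because the structural congruence rules are the same in both calculi and $\embp{\cdot}$ respects $\alpha$-equivalence and the scoping side conditions.

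\textbf{Main obstacle.} There is no deep obstacle; the work is in getting the auxiliary substitution lemma exactly right, since it must be proved \emph{simultaneously} for processes and values and must track how the embedding inserts $\bang{\cdot}$ around communicated and applied values --- a place where an off-by-one in the boxing would break the induction. A secondary point worth a remark is that $\embp{\cdot}$ preserves structural congruence in both directions as needed (in particular that we have disabled the garbage-collection laws, so no value silently disappears), but this is routine to verify rule by rule.
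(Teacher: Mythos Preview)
Your proposal is correct and follows the standard route one would expect: a structural induction for well-formation (with the all-$!$ context as the natural invariant), a substitution lemma for the embedding, and then an induction on the reduction derivation. The paper itself gives no proof of this proposition --- it is stated without argument --- so there is nothing substantive to compare against; your outline fills in exactly the details the authors left implicit.
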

By applying the map $\embp{\cdot}$ to our example process, $\server$, a linear process
$\lserver$ can be obtained:
\begin{align*}
\lserver&=\rest{\chanone}{(\para{\app \lcompon{(\bang{\star})}}{\outwc{\chanone}{\bang{\lcompon}}})};\\
\lcompon&=\eabstr{\varthree}{( 
\einp{\chanone}{\varone}{(\para{\einp{\chantwo}{\vartwo}
    {\out{\chanthree}{\bang{\vartwo}} {\outwc{\chanone}{\bang{\varone}}}}}{\app{\varone}{(\bang{\star})}})})}.
\end{align*}
\section{Termination in Bounded Time: Soft Processes}\label{sect:shopi}
In view of Proposition~\ref{prop:simulation}, \lhopi\ admits non terminating processes. Indeed, 
the prototypical divergent process from Section~\ref{sect:hopi} can be translated into a
linear process:
$$
\lprocomega=\rest{\chanone}{(\para{(\app\lprocdelta{(\bang{\star})})}{\outwc{\chanone}{\bang{\lprocdelta}}})},
\qquad\mbox{where}\qquad\lprocdelta =
\eabstr{\vartwo}{(
\einp{\chanone}\varone{(\para{{\app{\varone\:}{(\bang{\star})}}}{\outwc{\chanone}{\bang{\varone}}})})}.
$$
$\lprocomega$ cannot be terminating, since $\procomega$ itself does not terminate.

The more expressive syntax, however, may reveal
\emph{why} a process does not terminate. If we trace its execution, we see
that the divergence of $\lprocomega$ comes from $\lprocdelta$, where
 $\varone$ appears free twice in the inner body
${(\para{{\app{\varone\:}{(\bang{\star})}}}{\outwc{\chanone}{\bang{\varone}}})}$:
once in the scope of the $!$ operator, once outside any $!$.  When a value is
substituted for $\varone$ (and thus duplicated) one of the two copies interacts 
with the other, being copied again. It is this cyclic phenomenon (called \emph{modal
impredicativity} in~\cite{Dallago09}) that is responsible for nontermination.

The Linear Logic community has studied in depth the impact of unbalanced and multiple
boxes on the complexity
of computation, 
and singled out several (different) sufficient conditions for ensuring not only termination, but
termination with prescribed bounds. We will adopt here the conditions arising from
Lafont's analysis (and formalized in Soft Linear Logic, \SLL~\cite{Lafont:TCS01}), leaving
to further work the usage of other criteria. We thus introduce the calculus \shopi\ of
\emph{soft processes}, for which we will prove termination in polynomial time. 
In our view, this is the main contribution of the paper. 

Soft processes share the same grammar and operational semantics than linear processes (Section~\ref{sect:lhopi}), but
are subjected to stronger constraints, expressed by the well-formation rules of Figure~\ref{fig:shopiwff}.
\begin{figure}
\begin{center}
\fbox{\begin{minipage}[c]{.97\textwidth}
\vspace{12pt}
$$
\begin{array}{ccccc}
\infer
{\wfjsp{\dies{\conone}}{\emproc}}
{}
&
\hspace{10pt}
&
\infer
{\wfjsp{\conone,\contwo,\dies{\conthree}}{\para{\procone}{\proctwo}}}
{\wfjsp{\conone,\dies{\conthree}}{\procone} & \wfjsp{\contwo,\dies{\conthree}}{\proctwo}}
&
\hspace{10pt}
&
\infer
{\wfjsp{\conone}{\inp{\chanone}{\varone}{\procone}}}
{\wfjsp{\conone,\varone}{\procone}}
\end{array}
$$
\vspace{2pt}
$$
\begin{array}{ccccc}
\infer
{\wfjsp{\conone}{\einp{\chanone}{\varone}{\procone}}}
{\wfjsp{\conone,\bang{\varone}}{\procone}}
&
\hspace{10pt}
&
\infer
{\wfjsp{\conone}{\einp{\chanone}{\varone}{\procone}}}
{\wfjsp{\conone,\dies{\varone}}{\procone}}
&
\hspace{10pt}
&
\infer
{\wfjsp{\conone,\contwo,\dies{\conthree}}{\out{\chanone}{\valueone}{\procone}}}
{\wfjsv{\conone,\dies{\conthree}}{\valueone} & \wfjsp{\contwo,\dies{\conthree}}{\procone}}
\end{array}
$$
\vspace{2pt}
$$
\begin{array}{ccccc}
\infer
{\wfjsp{\conone}{\rest{\chanone}{\procone}}}
{\wfjsp{\conone}{\procone}}
&
\hspace{10pt}
&
\infer
{\wfjsp{\conone,\contwo,\dies{\conthree}}{\app{\valueone}{\valuetwo}}}
{\wfjsv{\conone,\dies{\conthree}}{\valueone} & \wfjsv{\contwo,\dies{\conthree}}{\valuetwo}}
&
\hspace{10pt}
&
\infer
{\wfjsv{\dies{\conone}}{\star}}
{}
\end{array}
$$
\vspace{2pt}
$$
\begin{array}{ccccc}
\infer
{\wfjsv{\dies{\conone},\varone}{\varone}}
{}
&
\hspace{10pt}
&
\infer
{\wfjsv{\dies{\conone},\dies{\varone}}{\varone}}
{}
&
\hspace{10pt}
&
\infer
{\wfjsv{\conone}{\abstr{\varone}{\procone}}}
{\wfjsp{\conone,\varone}{\procone}}
\end{array}
$$
\vspace{2pt}
$$
\begin{array}{ccccc}
\infer
{\wfjsv{\conone}{\eabstr{\varone}{\procone}}}
{\wfjsp{\conone,\dies{\varone}}{\procone}}
&
\hspace{10pt}
&
\infer
{\wfjsv{\conone}{\eabstr{\varone}{\procone}}}
{\wfjsp{\conone,\bang{\varone}}{\procone}}
&
\hspace{10pt}
&
\infer
{\wfjsv{\bang{\conone},\dies{\contwo}}{\bang{\valueone}}}
{\wfjsv{\conone}{\valueone}}
\end{array}
$$
\vspace{7pt}
\end{minipage}}
\end{center}
\caption{Processes and values in \shopi.}\label{fig:shopiwff}
\end{figure}
A context $\conone$ can now contain a variable $\varone$ in at most
one of \emph{three} different forms: $\varone$, $\bang{\varone}$,
or $\dies{\varone}$. The implicit contraction  (or weakening)  happens
on $\dies{\relax}$-marked variables, but none of them may ever appear inside a !-box.
In the last rule it is implicitly assumed that the context $\conone$ in the premise
is composed only of linear variables, if any (otherwise the context $\bang{\conone}$
of the conclusion would be ill-formed).
Indeed, 
the rules amount to say that, if $\wfjsp{\conone}{\procone}$ (and similarly for values),
then: (i) any linear variable $\varone$ in $\conone$
occurs exactly once in $\procone$, and at depth $0$ (this is as in \lhopi);
(ii) any nonlinear variable
$\bang{\varone}$ occurs exactly once in $\procone$, and at depth $1$;
(iii) any nonlinear
variable $\dies{\varone}$ may occur any number of times in $\procone$, all of its occurrences must be at level $0$. As a result, any bound variable appears in the scope of the binder always at a same level.
As in \lhopi, 
well-formed processes are closed by reduction:
\begin{proposition}
If $\wfjsp{\;}{\procone}$ and $\osl{\procone}{\proctwo}$, then
$\wfjsp{\;}{\proctwo}$.
\end{proposition}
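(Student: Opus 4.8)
The plan is to prove subject reduction for \shopi\ by induction on the derivation of $\osl{\procone}{\proctwo}$, following the usual pattern: handle the two linear base cases exactly as in the (stated) subject reduction lemma for \lhopi, handle the structural cases (parallel, restriction, congruence) by appealing to the induction hypothesis together with a compatibility analysis of the well-formation rules, and concentrate the real work on the two nonlinear base cases --- communication on a $!$-message and application of a $\eabstr{\varone}{\procone}$ to a $\bang{\valueone}$. As preliminaries I would first establish the standard bookkeeping lemmas for the \shopi\ system: a \emph{weakening} lemma for $\dies{}$-marked variables (if $\wfjsp{\conone}{\procone}$ then $\wfjsp{\conone,\dies{\varone}}{\procone}$, and likewise for values), a \emph{strengthening} lemma removing a $\dies{}$-variable that does not occur, and an \emph{equivariance/$\alpha$-conversion} lemma. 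I would also prove that structural congruence preserves well-formation: $\congr{\procone}{\proctwo}$ and $\wfjsp{\conone}{\procone}$ imply $\wfjsp{\conone}{\proctwo}$; this is a routine rule-by-rule check, the only slightly delicate clause being scope extrusion $(\rest{\chanone}{\para{\procone}{\proctwo}})\congrs\para{(\rest{\chanone}{\procone})}{\proctwo}$, which is unproblematic because channels and variables are separate syntactic categories and the context $\conone$ tracks only variables.

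The heart of the argument is a \textbf{substitution lemma}, which must come in two flavours matching the two nonlinear reduction rules. First, a linear substitution lemma: if $\wfjsp{\conone,\varone}{\procone}$ and $\wfjsv{\contwo,\dies{\conthree}}{\valueone}$ with the $\dies{}$-part shared appropriately, then $\wfjsp{\conone,\contwo,\dies{\conthree}}{\subst{\procone}{\varone}{\valueone}}$ --- this is the one already needed for \lhopi\ and is proved by induction on the derivation of $\wfjsp{\conone,\varone}{\procone}$, using that $\varone$ is linear so it is ``threaded'' down exactly one branch at each two-premise rule. Second, and this is the crucial new case, a \emph{nonlinear} substitution lemma handling the situation $\wfjsp{\conone,\dies{\varone}}{\procone}$ (from the $\dies{}$-input or $\dies{}$-abstraction rule) against a duplicable argument. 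Here one must substitute a value for a $\dies{}$-marked variable that may occur any number of times, all at level $0$; the argument being substituted is typed by the $\bang{\valueone}$ rule, whose premise $\wfjsv{\conthree}{\valueone}$ has a context $\conthree$ of \emph{only linear} variables (as the excerpt explicitly notes). The subtlety is that these linear variables of $\valueone$ must be consumed exactly once overall, yet $\valueone$ is being copied: this is consistent precisely because in the $\bang{}$ rule the conclusion is $\wfjsv{\bang{\conone},\dies{\contwo}}{\bang{\valueone}}$, so at the point where $\bang{\valueone}$ is formed those variables have become $!$-marked, hence each surrounding $\dies{\varone}$-occurrence sits at level $0$ while the copied linear variables land at level $1$ --- exactly the invariant (i)--(iii). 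I would phrase the lemma so that it simultaneously re-establishes the level discipline, proving it by induction on the derivation of the judgement for $\procone$, and I would need a companion version substituting under a $\bang{}$ (pushing a $\bang{}$-marked variable's witness inside one box), since $\dies{\varone}$ may be duplicated but its occurrences, after one reduction step, can end up needing the $\bang{}$-typing when nested structure is unfolded.

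Once these substitution lemmas are in place the two nonlinear base cases are immediate: for $\osl{\app{(\eabstr{\varone}{\procone})}{\bang{\valueone}}}{\subst{\procone}{\varone}{\valueone}}$ one inverts the application rule and the $\eabstr{}$ rule, obtains $\wfjsp{\conone,\dies{\varone}}{\procone}$ (or the $\bang{\varone}$ variant) and $\wfjsv{\ldots}{\bang{\valueone}}$, and applies the appropriate substitution lemma; the $!$-communication case is identical with the input rule in place of the abstraction rule. The structural cases close by the induction hypothesis, using the congruence-preservation lemma for the last rule. \textbf{The main obstacle} I anticipate is getting the statement of the nonlinear substitution lemma exactly right so that it is strong enough to go through its own induction --- in particular tracking how a $\dies{}$-variable interacts with the $\bang{}$-rule's ``linear-only premise context'' side-condition, and ensuring the level/depth invariants (i)--(iii) are preserved rather than merely well-formation. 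This is the place where an imprecise formulation would fail, so I would state it with explicit context splittings and an explicit claim about the level of each substituted occurrence, and prove the $!$-communication and nonlinear-application lemmas in tandem.
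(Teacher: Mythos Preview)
The paper does not actually give a proof of this proposition: it is stated immediately after the well-formation rules for \shopi\ with the remark ``As in \lhopi, well-formed processes are closed by reduction'', and the corresponding Lemma for \lhopi\ is likewise stated without proof. So there is nothing in the paper to compare your argument against beyond the implicit ``routine induction''.

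Your plan is the standard one and is correct. A couple of remarks on where you over-anticipate difficulty. First, since the proposition is about \emph{closed} processes ($\wfjsp{\;}{\procone}$), inversion on the application/communication rules yields $\wfjsv{\emcon}{\bang{\valueone}}$ and hence $\wfjsv{\emcon}{\valueone}$: the substituted value is closed, so the context-splitting concerns you raise about the ``linear-only premise'' of the $!$-rule simply do not arise. The three substitution lemmas you need are just: if $\wfjsp{\conone,\varone}{\procthree}$ (resp.\ $\conone,\dies{\varone}$, resp.\ $\conone,\bang{\varone}$) and $\wfjsv{\emcon}{\valueone}$ then $\wfjsp{\conone}{\subst{\procthree}{\varone}{\valueone}}$, together with the value analogues; these are exactly the shapes the paper later uses in its weight argument. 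Second, your worry that ``$\dies{\varone}$ may be duplicated but its occurrences, after one reduction step, can end up needing the $\bang{}$-typing'' is unfounded: a $\dies{}$-variable occurs only at depth $0$, so the substitution for $\dies{\varone}$ never crosses a $!$-box and no companion-under-$!$ lemma is needed for that case. The only place the induction passes through the $!$-rule is the $\bang{\varone}$ case, where the variable becomes linear in the premise and the linear substitution lemma applies; this is part of the same mutual induction, not a separate obstacle.
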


The nonterminating process $\lprocomega$ which started this section is \emph{not} a soft process, 
because the bound variable $x$ appears twice, once at depth $0$ and once depth $1$.  
And this is good news: we would like \shopi\ to be a calculus of terminating processes, at least! 
But this has some drawbacks: also
$\lserver$ is not a soft process. Indeed, \shopi\ is not able to discriminate between
$\lserver$ and $\lprocomega$, which share a very similar structure. We will come back to this
after we proved our main result on the polynomial bound on reduction sequences for soft processes.

\subsection{Feasible Termination}\label{sect:shopifr}
This section is devoted to the proof of feasible termination for soft processes. We prove that the length of
any reduction sequence from a soft process $\procone$ is bounded by a polynomial on the size of $\procone$.
Moreover, the size of any process along the reduction is itself polynomially bounded.

The proof proceeds similarly to the one for \SLL\ proof-nets by Lafont~\cite{Lafont:TCS01}.
The idea is relatively simple: a weight is assigned to every process and is proved to decrease at any normalization
step. The weight of a process can be proved to be an \emph{upper bound} on the size of the process. Finally, a polynomial
bound on the weight of a process holds. Altogether, this implies feasible termination.

Before embarking on the proofs, we need some preliminary definitions. First of all,
the \emph{size} of a process $\procone$ (respectively, a value $\valueone$) is defined simply as the number of symbols in it
and is denoted as $\size{\procone}$ (respectively, $\size{\valueone}$)
Another crucial attribute of processes and values is their \emph{box depth}, namely the maximum nesting of $!$ operators
inside them; for a process $\procone$ and a value $\valueone$, it is denoted either as $\bd{\procone}$ or as $\bd{\valueone}$.
The \emph{duplicability factor} $\df{\procone}$ of a process $\procone$ is the maximum number of free occurrences of a variable
$\varone$ for every binder in $\procone$; similarly for values. The precise definition follows, where $\nfo{\varone}{\procone}$ denotes the number of free occurrences on $\varone$ in
$\procone$.
\begin{align*}
\df{\star}=\df{\varone}=\df{\emproc}&=1; & 
\df{\abstr{\varone}{\procone}}=\df{\abstr{\bang{\varone}}{\procone}}&=\max\{\df{\procone},\nfo{\varone}{\procone}\};\\
\df{\bang{\valueone}}&=\df{\valueone}; &
\df{\para{\procone}{\proctwo}}&=\max\{\df{\procone},\df{\proctwo}\};\\
\df{\inp{\chanone}{\varone}{\procone}}=\df{\inp{\chanone}{\bang{\varone}}{\procone}}&=\max\{\df{\procone},\nfo{\varone}{\procone}\}; &
\df{\out{\chanone}{\valueone}{\procone}}&=\max\{\df{\valueone},\df{\procone}\};\\
\df{\rest{\chanone}{\procone}}&=\df{\procone}; &
\df{\app{\valueone}{\valuetwo}}&=\max\{\df{\valueone},\df{\valuetwo}\}.
\end{align*}
Finally, we can define the weight of processes and values. A notion of weight parametrized on a natural number $n$
can be given as follows, by induction on the structure of processes and values:
\begin{align*}
\weip{\star}{n}=\weip{\varone}{n}=\weip{\emproc}{n}&=1; &
\weip{\abstr{\varone}{\procone}}{n}=\weip{\abstr{\bang{\varone}}{\procone}}{n}&=\weip{\procone}{n};\\
\weip{\bang{\valueone}}{n}&=n\cdot\weip{\valueone}{n}+1; &
\weip{\para{\procone}{\proctwo}}{n}&=\weip{\procone}{n}+\weip{\proctwo}{n}+1;\\
\weip{\inp{\chanone}{\varone}{\procone}}{n}=\weip{\inp{\chanone}{\bang{\varone}}{\procone}}{n}&=\weip{\procone}{n}+1; &
\weip{\out{\chanone}{\valueone}{\procone}}{n}&=\weip{\valueone}{n}+\weip{\procone}{n};\\
\weip{\rest{\chanone}{\procone}}{n}&=\weip{\procone}{n}; &
\weip{\app{\valueone}{\valuetwo}}{n}&=\weip{\valueone}{n}+\weip{\valuetwo}{n}+1.
\end{align*}
Now, the \emph{weight} $\wei{\procone}$ of a process $\procone$ is $\weip{\procone}{\df{\procone}}$. Similarly for values.

The first auxiliary result is about structural congruence. As one would expect, two structurally
congruent terms have identical size, box depth, duplicability factor and weight:
\begin{proposition}\label{prop:shopisc}
if $\congr{\procone}{\proctwo}$, then
$\size{\procone}=\size{\proctwo}$, $\bd{\procone}=\bd{\proctwo}$, $\df{\procone}=\df{\proctwo}$. Moreover,
for every $n$, $\weip{\procone}{n}=\weip{\proctwo}{n}$.
\end{proposition}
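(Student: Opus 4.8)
The plan is to proceed by induction on the derivation of $\congr{\procone}{\proctwo}$, viewing $\congrs$ as the least congruence generated by the five axioms listed above (with $\alpha$-equivalence treated as one base case). Before starting, I would record one auxiliary fact needed for the duplicability factor: structural congruence preserves, for each variable $\varone$, the number $\nfo{\varone}{\cdot}$ of its free occurrences. This is itself proved by the same induction: none of the generating axioms duplicates or erases a subterm (here the absence of the garbage-collection laws $\para{\procone}{\emproc}\congrs\procone$ and $\rest{\chanone}{\emproc}\congrs\chanone$ is exactly what makes the statement true), $\alpha$-conversion leaves free occurrences untouched, and the congruence rules are handled by noting that $\nfo{\varone}{\cdot}$ is computed compositionally from the immediate subterms.

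For the base cases I would check each axiom separately. For $\alpha$-equivalence, all four quantities are insensitive to the names of bound variables: $\size{\cdot}$ counts symbols and replacing a bound variable by a fresh one replaces variable symbols by variable symbols one-for-one, hence preserves the count; $\bd{\cdot}$ and $\weip{\cdot}{n}$ are defined by structural recursion with no reference to variable names; and $\df{\cdot}$ refers to names only through $\nfo{\varone}{\cdot}$, which is $\alpha$-invariant. For the associativity and commutativity of $\para{}{}$, and for the exchange of restrictions $\rest{\chanone}{(\rest{\chantwo}{\procone})}\congrs\rest{\chantwo}{(\rest{\chanone}{\procone})}$, one just observes that the relevant defining clauses use only $+$ and $\max$ (both associative and commutative) and that restriction is transparent for each of the four measures. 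For scope extrusion $\rest{\chanone}{(\para{\procone}{\proctwo})}\congrs\para{(\rest{\chanone}{\procone})}{\proctwo}$ one computes both sides explicitly: $\size{\cdot}$ yields $\size{\procone}+\size{\proctwo}$ plus a fixed constant accounting for one restriction and one parallel operator on either side; $\weip{\cdot}{n}$ yields $\weip{\procone}{n}+\weip{\proctwo}{n}+1$ on either side; and $\bd{\cdot}$, $\df{\cdot}$ yield $\max\{\bd{\procone},\bd{\proctwo}\}$, $\max\{\df{\procone},\df{\proctwo}\}$ on either side by associativity of $\max$ and transparency of restriction. The side condition on $\chanone$ plays no role for these measures.

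The inductive cases are routine. Reflexivity is trivial; symmetry holds because the conclusion is a conjunction of equalities; transitivity follows by chaining the equalities given by the induction hypothesis. For the congruence-closure rules, the point is that for each one-step process context the measure of the filled context is a fixed function of the measure of the plugged subterm and of the (fixed) other immediate subterms: this is immediate for $\size{\cdot}$, $\bd{\cdot}$ and, for each fixed $n$, $\weip{\cdot}{n}$, since their defining clauses are exactly of this compositional form; and it holds for $\df{\cdot}$ as well once we additionally invoke the preliminary invariance of $\nfo{\varone}{\cdot}$, since the clauses for abstraction and input mention $\nfo{\varone}{\procone}$. Carrying out this case analysis uniformly in the parameter $n$ gives $\weip{\procone}{n}=\weip{\proctwo}{n}$ for all $n$ at once (and, combined with $\df{\procone}=\df{\proctwo}$, also $\wei{\procone}=\wei{\proctwo}$). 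I do not expect any genuine obstacle; the only spot demanding a little care is the interaction of the duplicability factor with $\alpha$-conversion and with the congruence rules, which is precisely why I would isolate the invariance of $\nfo{\varone}{\cdot}$ as a preliminary lemma rather than re-deriving it inside each case.
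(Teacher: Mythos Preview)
Your proposal is correct. The paper does not actually give a proof of this proposition---it merely states it and remarks that it would fail in the presence of the garbage-collection laws $\para{\procone}{\emproc}\congrs\procone$ and $\rest{\chanone}{\emproc}\congrs\emproc$---so your induction on the derivation of $\congrs$, with the invariance of $\nfo{\varone}{\cdot}$ singled out as a preliminary lemma to handle the binder clauses of $\df{\cdot}$, is exactly the routine verification the authors leave implicit.
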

Observe that Proposition~\ref{prop:shopisc} would not hold in presence of structural congruence rules like
$\para{\procone}{\emproc}\congrs\procone$ and $\rest{\chanone}{\emproc}\congrs\chanone$.

How does $\df{\procone}$ evolve during reduction? Actually, it cannot grow:
\begin{lemma}\label{lemma:shopidfni}
If $\wfjsp{\;}{\proctwo}$ and $\osl{\proctwo}{\procone}$, then
$\df{\proctwo}\geq\df{\procone}$.
\end{lemma}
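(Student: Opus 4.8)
The plan is to argue by induction on the derivation of $\osl{\proctwo}{\procone}$, following the shape of the operational semantics of Figure~\ref{fig:oslhopi}. A useful preliminary observation is that reduction contexts are built only from parallel composition and restriction (modulo $\equiv$): no rule lets a reduction happen underneath a prefix or an abstraction. Hence every redex contracted inside the closed process $\proctwo$ is itself closed, and in particular the value $\valueone$ that is transmitted on a channel or fed to an abstraction in a base case has no free variables.

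The heart of the matter is a substitution lemma: for every process $\proctwo$, every variable $\varone$ and every \emph{closed} value $\valueone$,
$$\df{\subst{\proctwo}{\varone}{\valueone}}\leq\max\{\df{\proctwo},\df{\valueone}\},$$
and likewise with a value in place of $\proctwo$. This is proved by a straightforward induction on $\proctwo$. The two points that make it work are: (i) $\df{\cdot}$ is completely insensitive to $!$-nesting (indeed $\df{\bang{\valueone}}=\df{\valueone}$), so it is irrelevant whether the occurrences of $\varone$ being replaced sit at depth $0$ or inside a box; and (ii) the only binders occurring in $\subst{\proctwo}{\varone}{\valueone}$ are those already present in $\proctwo$ — whose bound-variable multiplicities are unchanged, precisely because $\valueone$ is closed and cannot contribute new free occurrences of those variables — together with the copies of the binders of $\valueone$, each of which contributes at most $\df{\valueone}$. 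The binder cases ($\abstr{\vartwo}{\procthree}$, $\eabstr{\vartwo}{\procthree}$, and the two input forms) are where closedness of $\valueone$ is used, via the equation $\df{\abstr{\vartwo}{\procthree}}=\max\{\df{\procthree},\nfo{\vartwo}{\procthree}\}$ together with $\nfo{\vartwo}{\subst{\procthree}{\varone}{\valueone}}=\nfo{\vartwo}{\procthree}$.

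Granting the substitution lemma, each base case of the induction is a direct computation with the defining equations for $\df{\cdot}$. For instance, in the nonlinear application case $\app{(\eabstr{\varone}{\procone})}{\bang{\valueone}}$ reducing to $\subst{\procone}{\varone}{\valueone}$ we have $\df{\app{(\eabstr{\varone}{\procone})}{\bang{\valueone}}}=\max\{\df{\procone},\nfo{\varone}{\procone},\df{\valueone}\}$, while by the substitution lemma $\df{\subst{\procone}{\varone}{\valueone}}\leq\max\{\df{\procone},\df{\valueone}\}$, so the factor does not increase; the other three base cases (linear application, linear and nonlinear communication) are analogous, the key point being that the equation $\df{\eabstr{\varone}{\procone}}=\max\{\df{\procone},\nfo{\varone}{\procone}\}$ (and its analogue for replicated input) already records, at the binder, the number of copies that the duplicating step will create. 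Note that no case analysis on linear versus nonlinear binding is actually needed here.

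For the inductive cases: the parallel and restriction rules follow immediately from $\df{\para{\procone}{\procthree}}=\max\{\df{\procone},\df{\procthree}\}$, $\df{\rest{\chanone}{\procone}}=\df{\procone}$ and the induction hypothesis, after using inversion of the soft well-formation rules of Figure~\ref{fig:shopiwff} to obtain well-formation of the reducing component. The structural congruence rule uses Proposition~\ref{prop:shopisc} ($\df{\cdot}$ is invariant under $\equiv$) together with preservation of well-formation under $\equiv$ and under reduction (subject reduction for soft processes), so that the induction hypothesis applies to the inner reduction. The only genuinely delicate step is the substitution lemma, and within it the binder case; the subtlety there — that duplicating $\valueone$ cannot create a binder with more occurrences than before — is entirely absorbed by the closedness of transmitted values, which in turn is a consequence of the restricted shape of reduction contexts noted at the outset.
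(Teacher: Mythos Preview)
Your proposal is correct and takes essentially the same approach as the paper: both hinge on the substitution lemma $\df{\subst{\procone}{\varone}{\valueone}}\leq\max\{\df{\procone},\df{\valueone}\}$ for closed $\valueone$, after which the result follows by induction on the reduction derivation. The paper states and proves its substitution lemma by induction on a well-formation derivation for $\procone$ rather than on the raw syntax as you do, but this is an inessential difference (your version is in fact slightly more general, since closedness of $\valueone$ alone already handles the binder case).
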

\begin{proof}
As an auxiliary lemma, we can prove that
whenever $\wfjsp{\conone}{\procone}$ and $\wfjsv{\emcon}{\valueone},\wfjsv{\contwo}{\valuetwo}$,
both $\df{\subst{\procone}{\varone}{\valueone}}\leq\max\{\df{\procone},\df{\valueone}\}$ and 
$\df{\subst{\valuetwo}{\varone}{\valueone}}\leq\max\{\df{\valuetwo},\df{\valueone}\}$. This
is an easy induction on derivations
for $\wfjsp{\conone}{\procone}$ and $\wfjsv{\contwo}{\valuetwo}$. The thesis follows.
\end{proof}

The weight of a process is an upper bound to the size of the process itself.
This means that bounding the weight of a process implies bounding its
size. Moreover, the weight of a process strictly decreases at any reduction step.
\begin{lemma}\label{lemma:shopiweiub}
For every $\procone$, $\wei{\procone}\geq\size{\procone}$. 
\end{lemma}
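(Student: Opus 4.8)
The statement to prove is that $\wei{\procone} = \weip{\procone}{\df{\procone}} \geq \size{\procone}$ for every well-formed soft process $\procone$ (and similarly for values). The natural approach is induction on the structure of $\procone$, but one has to be slightly careful: the parameter $n$ in $\weip{\cdot}{n}$ is instantiated to $\df{\procone}$, which is a \emph{global} quantity of the whole process, whereas the induction hypothesis will naturally produce $\weip{\procone'}{\df{\procone'}}$ for subterms $\procone'$. So the cleaner statement to prove by induction is the following strengthened one: for every $n \geq \df{\procone}$ (resp.\ $n \geq \df{\valueone}$), one has $\weip{\procone}{n} \geq \size{\procone}$ (resp.\ $\weip{\valueone}{n} \geq \size{\valueone}$). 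Since $\weip{\cdot}{n}$ is clearly monotone in $n$ on every term — each occurrence of $n$ in the defining clauses appears positively — it actually suffices to prove $\weip{\procone}{1} \geq \size{\procone}$; but keeping the parameter general makes the $!$ case transparent, so I would carry $n \geq 1$ along explicitly.

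First I would dispatch the base cases: $\size{\star} = \size{\varone} = \size{\emproc} = 1 = \weip{\star}{n} = \weip{\varone}{n} = \weip{\emproc}{n}$, so equality holds. Then I would go through each compound construct, matching the recursive clause of $\size{\cdot}$ against that of $\weip{\cdot}{n}$ and invoking the induction hypothesis on immediate subterms. For parallel composition, $\size{\para{\procone}{\proctwo}}$ counts the symbols of $\procone$, of $\proctwo$, and the combinator, so it is $\size{\procone} + \size{\proctwo} + O(1)$, which is bounded by $\weip{\procone}{n} + \weip{\proctwo}{n} + 1 = \weip{\para{\procone}{\proctwo}}{n}$ provided the "$+1$" in the weight clause covers the constant in the size clause — this is exactly why the weight definition carries those $+1$ summands. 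The input/output/restriction/application cases are analogous: in each, the weight clause has enough additive slack ($\weip{\inp{\chanone}{\varone}{\procone}}{n} = \weip{\procone}{n}+1$, $\weip{\app{\valueone}{\valuetwo}}{n} = \weip{\valueone}{n}+\weip{\valuetwo}{n}+1$, etc.) to absorb the extra syntactic symbols ($\chanone$, $\varone$, angle brackets, the abstraction binder, $\nu$). The abstraction cases $\weip{\abstr{\varone}{\procone}}{n} = \weip{\procone}{n}$ are the tightest — here the weight does \emph{not} add anything for the binder, so one needs that $\size{\abstr{\varone}{\procone}}$ is counted in a way consistent with this, or one absorbs the constant into the slack available elsewhere; I would simply fix the convention for $\size{\cdot}$ (as "number of symbols") so that the inequality still goes through, noting that any honest accounting works because the only truly tight clause, the $!$ clause, has a huge surplus.

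The one case that needs a word is $\bang{\valueone}$, where $\weip{\bang{\valueone}}{n} = n\cdot\weip{\valueone}{n} + 1$ while $\size{\bang{\valueone}} = \size{\valueone} + 1$. Since $n \geq 1$, we get $n\cdot\weip{\valueone}{n} + 1 \geq \weip{\valueone}{n} + 1 \geq \size{\valueone} + 1 = \size{\bang{\valueone}}$ by the induction hypothesis, so this case is in fact the easiest. I do not anticipate a real obstacle here; the only thing requiring attention is the bookkeeping mismatch between the global instantiation $n = \df{\procone}$ in the definition of $\wei{\cdot}$ and the local recursion, which is resolved by the monotonicity remark above together with the trivial observation that $\df{\procone'} \leq \df{\procone}$ whenever $\procone'$ is an immediate subterm of $\procone$ (immediate from the defining clauses of $\df{\cdot}$, all of which take maxima over subterms). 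Once the strengthened statement "$n \geq \df{\procone}$ implies $\weip{\procone}{n} \geq \size{\procone}$" is established by the structural induction sketched above, the lemma follows by taking $n = \df{\procone}$.
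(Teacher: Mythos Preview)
Your proposal is correct and follows the same approach as the paper: structural induction on $\procone$, mutually with the analogous statement for values, with the key observation that the multiplicative parameter is at least $1$. The paper records this simply as ``observe that $\df{\procone},\df{\valueone}\geq 1$,'' which is precisely your hypothesis $n\geq 1$; your explicit treatment of the global-versus-local parameter mismatch via monotonicity of $\weip{\cdot}{n}$ in $n$ is a detail the paper leaves implicit.
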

\begin{proof}
By induction on $\procone$, strengthening the induction hypothesis
with a similar statement for values. In the induction, observe that
$\df{\procone},\df{\valueone}\geq 1$ for every process $\procone$
and value $\valueone$.
\end{proof}

\begin{proposition}\label{prop:shopiweidec}
If $\wfjsp{\;}{\proctwo}$ and $\osl{\proctwo}{\procone}$, then
$\wei{\proctwo}>\wei{\procone}$.
\end{proposition}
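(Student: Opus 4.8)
The plan is to prove, by induction on the derivation of $\osl{\proctwo}{\procone}$, the \emph{strengthened} statement
$$\text{for every } n\geq\df{\proctwo},\qquad \weip{\proctwo}{n}>\weip{\procone}{n}.$$
This yields the proposition at once: instantiating $n:=\df{\proctwo}$ gives $\wei{\proctwo}=\weip{\proctwo}{\df{\proctwo}}>\weip{\procone}{\df{\proctwo}}$; by Lemma~\ref{lemma:shopidfni} we have $\df{\proctwo}\geq\df{\procone}$, and $\weip{\procone}{\cdot}$ is monotone non-decreasing (an immediate induction on $\procone$: the parameter occurs only in the clause $\weip{\bang{\valueone}}{m}=m\cdot\weip{\valueone}{m}+1$ and all weights are $\geq 1$), so $\weip{\procone}{\df{\proctwo}}\geq\weip{\procone}{\df{\procone}}=\wei{\procone}$, whence $\wei{\proctwo}>\wei{\procone}$. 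The strengthening to ``all $n\geq\df{\proctwo}$'' (rather than just $n=\df{\proctwo}$) is what makes the induction go through the context rules, since the duplicability factor can only grow when we pass to a larger term.

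The base cases rest on a \emph{substitution lemma}: if $\wfjsv{\;}{\valueone}$ and all the $k=\nfo{\varone}{\proctwo}$ free occurrences of $\varone$ in $\proctwo$ sit at one and the same box depth $d$, then
$$\weip{\subst{\proctwo}{\varone}{\valueone}}{n}=\weip{\proctwo}{n}+k\cdot n^{d}\,(\weip{\valueone}{n}-1),$$
and likewise for values. This is a routine induction on $\proctwo$: every weight clause except $\bang{\cdot}$ is additive, so the corrections for the various occurrences simply add up, while $\weip{\bang{\proctwo'}}{n}=n\cdot\weip{\proctwo'}{n}+1$ turns a correction of $k\cdot n^{d-1}(\weip{\valueone}{n}-1)$ inside the box into $k\cdot n^{d}(\weip{\valueone}{n}-1)$ outside; the leaf case $\proctwo=\varone$ (with $d=0$, $k=1$) is $\weip{\valueone}{n}=1+(\weip{\valueone}{n}-1)$. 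The uniform-depth hypothesis is exactly what the rules of Figure~\ref{fig:shopiwff} guarantee: a linearly bound variable occurs once at depth $0$, one bound as $\bang{\varone}$ occurs once at depth $1$, one bound as $\dies{\varone}$ occurs any number of times, all at depth $0$.

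For the four reduction axioms one then simply computes the weight drop. For the two linear axioms the bound variable is used once at depth $0$, and the weight drops by exactly $2$, for every $n$. For the nonlinear communication axiom $\osl{\para{\out{\chanone}{\bang{\valueone}}{\procthree}}{\einp{\chanone}{\varone}{\procfour}}}{\para{\procthree}{\subst{\procfour}{\varone}{\valueone}}}$: if $\varone$ is bound as $\bang{\varone}$ (used once at depth $1$) the weight decreases by $n+2$; if it is bound as $\dies{\varone}$ (used $k\geq 0$ times at depth $0$) the decrease equals $(n-k)\,\weip{\valueone}{n}+k+2$, which is positive because $k=\nfo{\varone}{\procfour}\leq\df{\einp{\chanone}{\varone}{\procfour}}\leq\df{\proctwo}\leq n$. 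The nonlinear application axiom $\osl{\app{(\eabstr{\varone}{\procthree})}{\bang{\valueone}}}{\subst{\procthree}{\varone}{\valueone}}$ is entirely analogous, with decrease $n+2$ or $(n-k)\,\weip{\valueone}{n}+k+2>0$, again using $k\leq\df{\proctwo}\leq n$. For the inductive cases: from $\osl{\procthree}{\procfour}$ the parallel rule gives $\osl{\para{\procthree}{\procfive}}{\para{\procfour}{\procfive}}$ with $\weip{\para{\procthree}{\procfive}}{n}-\weip{\para{\procfour}{\procfive}}{n}=\weip{\procthree}{n}-\weip{\procfour}{n}$ and $\df{\para{\procthree}{\procfive}}\geq\df{\procthree}$, so any $n\geq\df{\para{\procthree}{\procfive}}$ satisfies $n\geq\df{\procthree}$ and the induction hypothesis applies; the restriction rule is even simpler, as weight and $\df{\cdot}$ are unchanged; and for the structural-congruence rule, Proposition~\ref{prop:shopisc} tells us $\congrs$ preserves both $\weip{\cdot}{n}$ and $\df{\cdot}$, so the claim transports unchanged across the two congruence steps wrapping the inner reduction.

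The delicate point — and the reason this mimics Lafont's \SLL\ argument rather than a naive one — is the interplay between the substitution lemma and the duplicability factor in the two nonlinear axioms: the weight drop there is guaranteed only when it is measured at a parameter $n$ at least as large as the number $k$ of copies the redex creates at depth $0$, which is precisely what $n\geq\df{\proctwo}$ secures, so that the factor $n$ released by opening the sender's $\bang{\cdot}$-box dominates the $k$-fold blow-up on the receiver's side. It is exactly here that the soft constraints (uniform depth of each bound variable, $\dies{}$-marked variables never inside a box) are used, and it is what ultimately makes the bound polynomial rather than exponential.
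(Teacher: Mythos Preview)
Your proof is correct and follows the same architecture as the paper's own argument: a substitution lemma governing how $\weip{\cdot}{n}$ behaves under substitution, followed by induction on the reduction derivation, with the soft well-formation constraints supplying the uniform-depth information needed in the nonlinear axioms. The two executions differ in packaging. The paper proves an \emph{inequality} substitution lemma with two parameters $n\geq m\geq 1$ (bounding $\weip{\subst{\procthree}{\varone}{\valueone}}{m}$ in terms of $\weip{\procthree}{n}$ and $\weip{\valueone}{n}$), and then works directly with $\wei{\cdot}$ in the induction, using $\df{\proctwo}\leq\df{\procone}$ to choose $m=\df{\proctwo}$, $n=\df{\procone}$. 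You instead prove an \emph{exact} substitution identity at a single parameter $n$, strengthen the inductive claim to all $n\geq\df{\proctwo}$, and recover $\wei{\proctwo}>\wei{\procone}$ at the end via Lemma~\ref{lemma:shopidfni} and monotonicity of $\weip{\procone}{\cdot}$. Your version is arguably tidier in the context rules: in the parallel case the paper tacitly identifies $\wei{\para{\procthree}{\procfive}}$ with $\wei{\procthree}+\wei{\procfive}+1$, which is only literally true when $\df{\procthree}=\df{\procfive}$, whereas your parametrized statement handles this uniformly. Conversely, the paper's two-parameter lemma absorbs the monotonicity step. Either way the crux is identical: the $n$ released by opening a $\bang{\cdot}$-box dominates the at most $\df{\proctwo}\leq n$ copies created.
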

\begin{proof}
As an auxiliary result, we need to prove the following (slightly modifications of) substitution lemmas
(let $\wfjsv{\emcon}{\valueone}$ and $n\geq m\geq 1$):
\begin{varitemize}
\item
  If $\tdone:\wfjsp{\conone,\varone}{\procthree}$, then
  $\weip{\subst{\procthree}{\varone}{\valueone}}{m}\leq \weip{\procthree}{n}+\weip{\valueone}{n}$;
\item
  If $\tdone:\wfjsv{\conone,\varone}{\valuetwo}$, then
  $\weip{\subst{\valuetwo}{\varone}{\valueone}}{m}\leq \weip{\valuetwo}{n}+\weip{\valueone}{n}$;
\item
  If $\tdone:\wfjsp{\conone,\dies{\varone}}{\procthree}$, then
  $\weip{\subst{\procthree}{\varone}{\valueone}}{m}\leq \weip{\procthree}{n}+\nfo{\varone}{\procthree}\cdot\weip{\valueone}{n}$;
\item
  If $\tdone:\wfjsv{\conone,\dies{\varone}}{\valuetwo}$, then
  $\weip{\subst{\valuetwo}{\varone}{\valueone}}{m}\leq \weip{\valuetwo}{n}+\nfo{\varone}{\valuetwo}\cdot\weip{\valueone}{n}$;
\item
  If $\tdone:\wfjsp{\conone,\bang{\varone}}{\procthree}$, then
  $\weip{\subst{\procthree}{\varone}{\valueone}}{m}\leq \weip{\procthree}{n}+n\cdot\weip{\valueone}{n}$;
\item
  If $\tdone:\wfjsv{\conone,\bang{\varone}}{\valuetwo}$, then
  $\weip{\subst{\valuetwo}{\varone}{\valueone}}{m}\leq \weip{\valuetwo}{n}+n\cdot\weip{\valueone}{n}$;
\end{varitemize}
This is an induction on $\tdone$. An inductive case:
\begin{varitemize}
\item
  If $\tdone$ is:
  $$
  \infer
      {\wfjsv{\bang{\conone},\bang{\varone},\dies{\contwo}}{\bang{\valuethree}}}
      {\wfjsv{\conone,\varone}{\valuethree}}
  $$
  then $\valuetwo=\bang{\valuethree}$ and $\subst{(\bang{\valuethree})}{\varone}{\valueone}$ is simply
  $\bang{(\subst{\valuethree}{\varone}{\valueone})}$. As a consequence:
  \begin{align*}
  \weip{\subst{\valuetwo}{\varone}{\valueone}}{m}&=m\cdot\weip{\subst{\valuethree}{\varone}{\valueone}}{m}+1
      \leq n\cdot(\weip{\valuethree}{n}+\weip{\valueone}{n})+1=n\cdot\weip{\valuethree}{n}+n\cdot\weip{\valueone}{n}+1\\
    &=\weip{\bang{\valuethree}}{n}+n\cdot\weip{\valueone}{n}=\weip{\valuetwo}{n}+n\cdot\weip{\valueone}{n}.
  \end{align*}
\end{varitemize}
With the above observations in hand, we can easily prove the thesis by induction
on any derivation $\tdtwo$ of $\os{\procone}{\proctwo}$:
\begin{varitemize}
\item
  Suppose $\tdtwo$ is
  $$
  \infer
  {\osl{\para{\out{\chanone}{\valueone}{\procthree}}{\inp{\chanone}{\varone}{\procfour}}}
       {\para{\procthree}{\subst{\procfour}{\varone}{\valueone}}}}
  {}
  $$
  From $\wfjsp{\emcon}{\para{\out{\chanone}{\valueone}{\procthree}}{\inp{\chanone}{\varone}{\procfour}}}$, it
  follows that $\wfjsp{\emcon}{\procthree}$, $\wfjsv{\emcon}{\valueone}$
  and $\wfjsp{\varone}{\procfour}$. As a consequence, since $\df{\proctwo}\leq\df{\procone}$, 
  \begin{align*}
    \wei{\procone}&=\wei{\para{\out{\chanone}{\valueone}{\procthree}}{\inp{\chanone}{\varone}{\procfour}}}
      =\weip{\valueone}{\df{\procone}}+\weip{\procthree}{\df{\procone}}+\weip{\procfour}{\df{\procone}}+2\\
      &\geq\weip{\subst{\procfour}{\varone}{\valueone}}{\df{\proctwo}}+\weip{\procthree}{\df{\proctwo}}+2
      >\weip{\subst{\procfour}{\varone}{\valueone}}{\df{\proctwo}}+\weip{\procthree}{\df{\proctwo}}+1
      =\weip{\para{\subst{\procfour}{\varone}{\valueone}}{\procthree}}{\df{\proctwo}}.
  \end{align*}
\item
  Suppose $\tdtwo$ is
  $$
  \infer
  {\osl{\para{\out{\chanone}{\bang{\valueone}}{\procthree}}{\einp{\chanone}{\varone}{\procfour}}}
       {\para{\procthree}{\subst{\procfour}{\varone}{\valueone}}}}
  {}
  $$
  From $\wfjsp{\emcon}{\para{\out{\chanone}{\valueone}{\procthree}}{\inp{\chanone}{\varone}{\procfour}}}$, it
  follows that $\wfjsp{\emcon}{\procthree}$, $\wfjsv{\emcon}{\valueone}$
  and either $\wfjsp{\bang{\varone}}{\procfour}$ or $\wfjsp{\dies{\varone}}{\procfour}$. 
  In the first case:
  \begin{align*}
    \wei{\procone}&=\wei{\para{\out{\chanone}{\bang{\valueone}}{\procthree}}{\inp{\chanone}{\varone}{\procfour}}}
      =\weip{\bang{\valueone}}{\df{\procone}}+\weip{\procthree}{\df{\procone}}+\weip{\procfour}{\df{\procone}}+2\\
      &=\df{\procone}\cdot\weip{\valueone}{\df{\procone}}+\weip{\procthree}{\df{\procone}}+\weip{\procfour}{\df{\procone}}+3
      \geq\weip{\subst{\procfour}{\varone}{\valueone}}{\df{\proctwo}}+\weip{\procthree}{\df{\proctwo}}+3\\
      &>\weip{\subst{\procfour}{\varone}{\valueone}}{\df{\proctwo}}+\weip{\procthree}{\df{\proctwo}}+1
      =\weip{\para{\subst{\procfour}{\varone}{\valueone}}{\procthree}}{\df{\proctwo}}.
  \end{align*}
  In the second case:
  \begin{align*}
    \wei{\procone}&=\wei{\para{\out{\chanone}{\bang{\valueone}}{\procthree}}{\inp{\chanone}{\varone}{\procfour}}}
      =\weip{\bang{\valueone}}{\df{\procone}}+\weip{\procthree}{\df{\procone}}+\weip{\procfour}{\df{\procone}}+2\\
      &=\df{\procone}\cdot\weip{\valueone}{\df{\procone}}+\weip{\procthree}{\df{\procone}}+\weip{\procfour}{\df{\procone}}+3\\
      &\geq\nfo{\varone}{\procfour}\cdot\weip{\valueone}{\df{\procone}}+\weip{\procthree}{\df{\procone}}+\weip{\procfour}{\df{\procone}}+3\\
      &\geq\weip{\subst{\procfour}{\varone}{\valueone}}{\df{\proctwo}}+\weip{\procthree}{\df{\proctwo}}+3
      >\weip{\subst{\procfour}{\varone}{\valueone}}{\df{\proctwo}}+\weip{\procthree}{\df{\proctwo}}+1
      =\weip{\para{\subst{\procfour}{\varone}{\valueone}}{\procthree}}{\df{\proctwo}}.
  \end{align*}
\item
  Suppose $\tdtwo$ is
  $$
  \infer
  {\osl{\para{\procthree}{\procfive}}{\para{\procfour}{\procfive}}}
  {\tdthree:\osl{\procthree}{\procfour}}
  $$
  From  $\wfjsp{\emcon}{\para{\procthree}{\procfive}}$, it follows
  that $\wfjsp{\emcon}{\procthree}$ and $\wfjsp{\emcon}{\procfive}$.
  By induction hypothesis on $\tdthree$, this yields 
  $\wei{\procthree}>\wei{\procfour}$,
  and in turn $\wei{\procthree}=\wei{\procthree}+\wei{\procfive}+1>\wei{\procfour}+\wei{\procfive}+1=\wei{\procfour}$.
\end{varitemize}
This concludes the proof.
\end{proof}

Lemma~\ref{lemma:shopiweiub} and Proposition~\ref{prop:shopiweidec} together imply that the
weight is an upper bound to both the number of reduction steps a process can perform and
the size of any reduct. So, the only missing tale is bounding the weight itself:
\begin{proposition}
For every process $\procone$, $\wei{\procone}\leq\size{\procone}^{\bd{\procone}+1}$. 
\end{proposition}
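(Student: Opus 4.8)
The plan is to prove the bound $\weip{\procone}{n}\leq\size{\procone}\cdot n^{\bd{\procone}}$ by induction on the structure of $\procone$ (and simultaneously on values $\valueone$), for an arbitrary natural number $n\geq 1$. Once this is established, specializing $n=\df{\procone}$ gives $\wei{\procone}=\weip{\procone}{\df{\procone}}\leq\size{\procone}\cdot\df{\procone}^{\bd{\procone}}$, and since $\df{\procone}\leq\size{\procone}$ (a free variable occurrence is a symbol, so a binder with $k$ free occurrences of its variable already contributes $k$ symbols; more crudely, $\df{\procone}$ never exceeds the number of symbols of $\procone$) and $\bd{\procone}\leq\size{\procone}$, we get $\wei{\procone}\leq\size{\procone}\cdot\size{\procone}^{\bd{\procone}}=\size{\procone}^{\bd{\procone}+1}$, which is the claimed inequality. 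So the real content is the parametrized lemma $\weip{\procone}{n}\leq\size{\procone}\cdot n^{\bd{\procone}}$.

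For the induction, the base cases $\star$, $\varone$, $\emproc$ are immediate: weight $1$, size $1$, box depth $0$, and $n^{0}=1$. The multiplicative/structural cases are routine monotonicity bookkeeping: for $\para{\procone}{\proctwo}$ we have $\weip{\para{\procone}{\proctwo}}{n}=\weip{\procone}{n}+\weip{\proctwo}{n}+1$, and by induction this is at most $\size{\procone}\cdot n^{\bd{\procone}}+\size{\proctwo}\cdot n^{\bd{\proctwo}}+1\leq(\size{\procone}+\size{\proctwo}+1)\cdot n^{\bd{\para{\procone}{\proctwo}}}$, using $n^{\bd{\procone}},n^{\bd{\proctwo}}\leq n^{\bd{\para{\procone}{\proctwo}}}$ since $n\geq 1$ and $\bd{\para{\procone}{\proctwo}}=\max\{\bd{\procone},\bd{\proctwo}\}$; the same pattern (splitting the size additively, bounding each box depth by the maximum) covers $\out{\chanone}{\valueone}{\procthree}$, $\app{\valueone}{\valuetwo}$, the input prefixes, restriction, and the abstraction cases (where size only grows and box depth is unchanged, so the bound is inherited directly).

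The one case that carries the actual argument is the box $\bang{\valueone}$. Here $\weip{\bang{\valueone}}{n}=n\cdot\weip{\valueone}{n}+1$, $\bd{\bang{\valueone}}=\bd{\valueone}+1$, and $\size{\bang{\valueone}}=\size{\valueone}+1$. By the induction hypothesis $\weip{\valueone}{n}\leq\size{\valueone}\cdot n^{\bd{\valueone}}$, so
\[
\weip{\bang{\valueone}}{n}=n\cdot\weip{\valueone}{n}+1\leq n\cdot\size{\valueone}\cdot n^{\bd{\valueone}}+1=\size{\valueone}\cdot n^{\bd{\valueone}+1}+1\leq(\size{\valueone}+1)\cdot n^{\bd{\bang{\valueone}}},
\]
where the last step uses $1\leq n^{\bd{\valueone}+1}=n^{\bd{\bang{\valueone}}}$ (again because $n\geq 1$). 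This is exactly $\size{\bang{\valueone}}\cdot n^{\bd{\bang{\valueone}}}$, so the bound is maintained. The point of introducing the box depth as the exponent is precisely that it absorbs the factor of $n$ that each $!$ contributes to the weight; the main (mild) obstacle is just making sure the exponent bookkeeping is uniform across all the non-box cases, i.e., that one always bounds a child's $n^{\bd{\cdot}}$ by $n^{\bd{\cdot}}$ of the parent using $n\geq 1$, and in the box case this slack is exactly consumed. Finally one notes, as above, that $\df{\procone}\geq 1$ always (so that $n=\df{\procone}$ is a legal instantiation) and $\df{\procone},\bd{\procone}\leq\size{\procone}$, to pass from the parametrized lemma to the stated polynomial bound.
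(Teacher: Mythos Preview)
Your proof is correct and follows the same structural-induction strategy the paper sketches. The paper's one-line argument (``by induction on $\procone$, enriching the thesis with an analogous statement for values'') leaves implicit the parametrized strengthening $\weip{\procone}{n}\leq\size{\procone}\cdot n^{\bd{\procone}}$ for arbitrary $n\geq 1$ that you correctly isolate and verify (the naive induction on $\wei{\procone}\leq\size{\procone}^{\bd{\procone}+1}$ alone does not close, since $\df{\cdot}$ varies on subterms); your instantiation at $n=\df{\procone}$ together with $\df{\procone}\leq\size{\procone}$ then yields the stated bound.
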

\begin{proof}
By induction on $\procone$, enriching the thesis with an analogous statement for values:
$\wei{\valueone}\leq\size{\valueone}^{\bd{\valueone}+1}$.
\end{proof}

Putting all the ingredients together, we reach our soundness result with respect polynomial time:
\begin{theorem}\label{theo:soundshopi}
There is a family of polynomials $\{p_n\}_{n}$ such that for every process $\procone$ and for every
$m$, if $\oslp{\procone}{m}{\proctwo}$, then $m,\size{\proctwo}\leq p_{\bd{\procone}}(\size{\procone})$.
\end{theorem}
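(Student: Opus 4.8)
The plan is to derive the statement directly from the three results just proved, with one small numerical observation added on top. Throughout I read ``process'' in the statement as ``well-formed closed soft process'', i.e.\ a $\procone$ with $\wfjsp{\;}{\procone}$: this is the setting in which the reduction relation of \shopi\ is defined and in which Proposition~\ref{prop:shopiweidec} applies, and by the subject-reduction proposition for soft processes every process reachable from such a $\procone$ by $\osl{\procone}{\proctwo}$-steps is again of this kind. The extra observation I would record first is that the weight of a process (or value) is always a strictly positive integer: this is immediate from the defining clauses of $\weip{\cdot}{n}$, each of which returns a value $\geq 1$ (the duplicability factor $\df{\procone}$ used in forming $\wei{\procone}=\weip{\procone}{\df{\procone}}$ is itself $\geq 1$).

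With that in place I would take the polynomial family $p_n(x) := x^{n+1}$ and argue as follows. Suppose $\oslp{\procone}{m}{\proctwo}$, witnessed by a sequence $\procone = \procone_0, \procone_1, \ldots, \procone_m = \proctwo$ with $\osl{\procone_i}{\procone_{i+1}}$ for each $i<m$. By iterated subject reduction, $\wfjsp{\;}{\procone_i}$ holds for every $i\leq m$, so Proposition~\ref{prop:shopiweidec} applies at every step and yields the strictly decreasing chain $\wei{\procone_0} > \wei{\procone_1} > \cdots > \wei{\procone_m}$. Since these are positive integers, each step drops the weight by at least $1$, so $\wei{\procone_0} \geq \wei{\procone_m} + m \geq m+1$, whence $m \leq \wei{\procone_0} - 1$; the same chain also gives $\wei{\proctwo} = \wei{\procone_m} \leq \wei{\procone_0}$.

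It remains to bound $\wei{\procone_0}$ in terms of $\size{\procone}$, which is exactly the content of the polynomial bound on the weight proved above: $\wei{\procone_0} \leq \size{\procone_0}^{\bd{\procone_0}+1} = \size{\procone}^{\bd{\procone}+1} = p_{\bd{\procone}}(\size{\procone})$. Combining this with the previous paragraph gives $m \leq \wei{\procone_0} - 1 < p_{\bd{\procone}}(\size{\procone})$, and, invoking Lemma~\ref{lemma:shopiweiub} for the size of the reduct, $\size{\proctwo} \leq \wei{\proctwo} \leq \wei{\procone_0} \leq p_{\bd{\procone}}(\size{\procone})$. This establishes both inequalities of the theorem, uniformly in $m$ and $\proctwo$, with the single family $p_n(x) = x^{n+1}$.

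I do not expect a genuinely hard step here; the whole feasibility content has been front-loaded into Lemma~\ref{lemma:shopiweiub}, Proposition~\ref{prop:shopiweidec} and the polynomial bound on $\wei{\cdot}$. The points that need a moment's care are: (i) the integrality of weights, which is what converts strict decrease into a bound on the number of steps; (ii) invoking subject reduction once at the outset, so that Proposition~\ref{prop:shopiweidec} is legitimately applicable at every step of the sequence rather than only the first; and (iii) the fact that the structural-congruence manipulations absorbed into the reduction rules do not perturb the weight — but that is already built into Proposition~\ref{prop:shopisc}, and hence into Proposition~\ref{prop:shopiweidec}, so no additional argument is needed at this stage.
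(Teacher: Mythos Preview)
Your proposal is correct and follows the same route the paper sketches: the paper does not spell out a proof of Theorem~\ref{theo:soundshopi} but simply says that Lemma~\ref{lemma:shopiweiub} and Proposition~\ref{prop:shopiweidec} make the weight an upper bound on both the length of any reduction and the size of any reduct, and then invokes the polynomial bound $\wei{\procone}\leq\size{\procone}^{\bd{\procone}+1}$. Your explicit derivation, including the observations on integrality and positivity of weights and the use of subject reduction along the whole sequence, is exactly the intended unpacking of that sentence.
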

The polynomials in Theorem \ref{theo:soundshopi} depend on terms, so the bound on the number
of internal actions is not polynomial, strictly speaking. Please observe, however, that all processes
with the same box depth $b$ are governed by the same polynomial $p_b$, similarly to what happens in 
Soft Linear Logic.

\subsection{Completeness?}
Soundness of a formal system with respect to some semantic criterion is useless unless
one shows that the system is also \emph{expressive enough}. 
In implicit computational complexity, programming languages 
are usually proved both sound and \emph{extensionally complete} with respect to a complexity class. 
Not only any program can be normalized in bounded
time, but every function in the class can be computed by a program in the system.
Preliminary to any completeness result for \shopi, however, would be the definition of
what a complexity class for processes should be (as opposed to the well known definition for
functions or problems). This is an elusive---and very interesting---problem that we cannot
tackle in this preliminary work and that we leave for future work. 

Certainly the expressiveness of \shopi\ is weak if we take into
account the visible actions  of the processes (i.e., their interactions
with the environment). This is due to the limited possibilities of
copying, and hence also of writing recursive process
 behaviours. Indeed, one cannot consider \shopi, on its own, as a general-purpose calculus
 for concurrency. However, we believe that the study  of \shopi,
 or similar languages, could  be fruitful in establishing bounds on
 the internal behaviour of parts, or components, of a concurrent
 systems; for instance, on the time and space that a process may take
 to answer a query from another process (in this case the \shopi\ techniques
would be applied  to the parts of the syntax of the process that
describe its internal computation after the query).
Next section considers a possible direction of development of
\shopi, allowing more freedom on the external actions of the processes.

We are convinced, on the other hand, that a minimal completeness result can be given,
namely the possibility of representing all polynomial time \emph{functions} (or
problems) in \shopi. Possibly, this could be done by encoding Soft Linear Logic into
\shopi\ through a continuation-passing style translation. We leave this 
to future work.

\section{An Extension to \shopi:   Spawning}\label{sect:eshopi}

In this section we propose an extension of \shopi\ that allows us to
accept processes such as $\lserver$, capable of performing 
infinitely many interactions with their external environment while
maintaining  polynomial bounds on the number of internal steps they
can make between
any two external actions.

The reason why  $\lserver$  is \emph{not} a
\shopi\ process has to do with the bound variable $\varone$ in the sub-process $\lcompon$:
$$
\lcompon=\eabstr{\varthree}{( 
\einp{\chanone}{\varone}{(\para{\einp{\chantwo}{\vartwo}
    {\out{\chanthree}{\bang{\vartwo}} {\app \varone {(\bang{\star})}}}} 
\outwc{\chanone}{\bang{\varone}})})},
$$
The variable appears twice in the body
$(\para{\einp{\chantwo}{\vartwo}{\out{\chanthree}{\bang{\vartwo}}{\app \varone {(\bang{\star})}}}} 
\outwc{\chanone}{\bang{\varone}})$, at two different !-depths.  This pattern is not 
permitted in \shopi, because otherwise also the nonterminating process
 $\lprocomega$ would be in the calculus. 
There is however a major difference between 
$\lprocomega$ and $\lserver$:
in $\lcompon$, one of the two occurrences
of $\varone$ (the one at depth $0$) is part of the continuation of an input
on $\chantwo$; moreover, such channel  $\chantwo$ is only used by
$\lserver$ in
input~--- $\lserver$ does not own the output capability.
This implies that  whatever process  will  substitute that occurrence of $\varone$,
it will be able to interact with the environment only \emph{after} an input on $\chantwo$
is performed. So, its ``computational weight'' does not affect the number of reduction steps
made by the process \emph{before} such an input occurs. This phenomenon, which does not occur 
in $\lprocomega$, can be seen as a form of process spawning: $\lcompon$ can be copied
an unbounded number of times, but  the rhythm of the copying is dictated
by the  input actions at $\chantwo$. 

Consider a subset $\inpchans$ of $\chans$ (where $\chans$ is the set of all channels which
can appear in processes). The process calculus \eshopi{\inpchans} is an extension of 
\shopi\ parametrized on $\inpchans$. What \eshopi{\inpchans} adds to
\shopi\ is precisely the possibility  of marking a subprocess as a component which
can be spawned. This is accomplished with a new operator $\Box$. 
Channels in $\inpchans$ are called \emph{input channels}, because outputs
are forbidden on them. The syntax of processes and values is enriched as follows:
\begin{align*}
\procone&::=\ldots\mid\sinp{\chanone}{\varone}{\procone};\\
\valueone&::=\ldots\mid\sabstr{\varone}{\procone}\mid\sang{\valueone};
\end{align*}
but outputs can only be performed on channels not in $\inpchans$.
The term $\sang{\valueone}$ is a value (i.e., a parametrized process) which can be
spawned. Spawning itself is performed by passing a process $\sang{\valueone}$
to either an abstraction $\sabstr{\varone}{\procone}$ or an input
$\sinp{\chanone}{\varone}{\procone}$. 
In both cases, exactly one occurrence of $\varone$ in $\procone$ is the scope
of a $\Box$ operator, and only one of the following 
two conditions holds:
\begin{varenumerate}
\item
  The occurrence of $\varone$ in the scope of a $\Box$ operator is part of the continuation of an input channel $\chanone$, and all other occurrences of $\varone$ in $\procone$ are at depth $0$.
\item
  There are no other occurrences of $\varone$ in $\procone$.
\end{varenumerate}
The foregoing constraints are enforced by the well-formation rules in
Figure~\ref{fig:eshopiwff}.
\begin{figure}
\begin{center}
\fbox{\begin{minipage}[c]{.97\textwidth}
\vspace{12pt}
$$
\begin{array}{ccccc}
\infer
{\wfjep{\dies{\conone}}{\emproc}}
{}
&
\hspace{10pt}
&
\infer
{\wfjep{\conone,\contwo,\dies{\conthree},\dang{\confour}}{\para{\procone}{\proctwo}}}
{\wfjep{\conone,\dies{\conthree},\dang{\confour}}{\procone} & \wfjep{\contwo,\dies{\conthree},\dies{\confour}}{\proctwo}}
&
\hspace{10pt}
&
\infer
{\wfjep{\conone}{\inp{\chanone}{\varone}{\procone}}}
{\wfjep{\conone,\varone}{\procone}}
\end{array}
$$
\vspace{2pt}
$$
\begin{array}{ccccccc}
\infer
{\wfjep{\conone}{\einp{\chanone}{\varone}{\procone}}}
{\wfjep{\conone,\bang{\varone}}{\procone}}
&
\hspace{10pt}
&
\infer
{\wfjep{\conone}{\einp{\chanone}{\varone}{\procone}}}
{\wfjep{\conone,\dies{\varone}}{\procone}}
&
\hspace{10pt}
&
\infer
{\wfjep{\conone}{\sinp{\chanone}{\varone}{\procone}}}
{\wfjep{\conone,\sang{\varone}}{\procone}}
&
\hspace{10pt}
&
\infer
{\wfjep{\conone}{\sinp{\chanone}{\varone}{\procone}}}
{\wfjep{\conone,\dang{\varone}}{\procone}}
\end{array}
$$
\vspace{2pt}
$$
\begin{array}{ccc}
\infer
{\wfjep{\conone,\dang{\contwo}}{\inp{\chanone}{\varone}{\procone}}}
{\wfjep{\conone,\sang{\contwo},\varone}{\procone} & \chanone\in\inpchans}
&
\hspace{10pt}
&
\infer
{\wfjep{\conone,\dang{\contwo}}{\einp{\chanone}{\varone}{\procone}}}
{\wfjep{\conone,\sang{\contwo},\bang{\varone}}{\procone} & \chanone\in\inpchans}
\end{array}
$$
\vspace{2pt}
$$
\begin{array}{ccccc}
\infer
{\wfjep{\conone,\dang{\contwo}}{\einp{\chanone}{\varone}{\procone}}}
{\wfjep{\conone,\sang{\contwo},\dies{\varone}}{\procone} & \chanone\in\inpchans}
&
\hspace{10pt}
&
\infer
{\wfjep{\conone,\dang{\contwo}}{\sinp{\chanone}{\varone}{\procone}}}
{\wfjep{\conone,\sang{\contwo},\sang{\varone}}{\procone} & \chanone\in\inpchans}
&
\hspace{10pt}
&
\infer
{\wfjep{\conone,\dang{\contwo}}{\sinp{\chanone}{\varone}{\procone}}}
{\wfjep{\conone,\sang{\contwo},\dang{\varone}}{\procone} & \chanone\in\inpchans}
\end{array}
$$
\vspace{2pt}
$$
\begin{array}{ccc}
\infer
{\wfjep{\conone,\contwo,\dies{\conthree},\dang{\confour}}{\out{\chanone}{\valueone}{\procone}}}
{\wfjev{\conone,\dies{\conthree},\dang{\confour}}{\valueone} & \wfjep{\contwo,\dies{\conthree},\dies{\confour}}{\procone}}
&
\hspace{10pt}
&
\infer
{\wfjep{\conone,\contwo,\dies{\conthree},\dang{\confour}}{\out{\chanone}{\valueone}{\procone}}}
{\wfjev{\conone,\dies{\conthree},\dies{\confour}}{\valueone} & \wfjep{\contwo,\dies{\conthree},\dang{\confour}}{\procone}}
\end{array}
$$
\vspace{2pt}
$$
\begin{array}{ccccc}
\infer
{\wfjep{\conone}{\rest{\chanone}{\procone}}}
{\wfjep{\conone}{\procone}}
&
\hspace{5pt}
&
\infer
{\wfjep{\conone,\contwo,\dies{\conthree},\dang{\confour}}{\app{\valueone}{\valuetwo}}}
{\wfjev{\conone,\dies{\conthree},\dang{\confour}}{\valueone} & \wfjev{\contwo,\dies{\conthree},\dies{\confour}}{\valuetwo}}
&
\hspace{5pt}
&
\infer
{\wfjep{\conone,\contwo,\dies{\conthree},\dang{\confour}}{\app{\valueone}{\valuetwo}}}
{\wfjev{\conone,\dies{\conthree},\dies{\confour}}{\valueone} & \wfjev{\contwo,\dies{\conthree},\dang{\confour}}{\valuetwo}}
\end{array}
$$
\vspace{2pt}
$$
\begin{array}{ccccccc}
\infer
{\wfjev{\dies{\conone}}{\star}}
{}
&
\hspace{10pt}
&
\infer
{\wfjev{\dies{\conone},\varone}{\varone}}
{}
&
\hspace{10pt}
&
\infer
{\wfjev{\dies{\conone},\dies{\varone}}{\varone}}
{}
&
\hspace{10pt}
&
\infer
{\wfjev{\conone}{\abstr{\varone}{\procone}}}
{\wfjep{\conone,\varone}{\procone}}
\end{array}
$$
\vspace{2pt}
$$
\begin{array}{ccccc}
\infer
{\wfjev{\conone}{\eabstr{\varone}{\procone}}}
{\wfjep{\conone,\dies{\varone}}{\procone}}
&
\hspace{10pt}
&
\infer
{\wfjev{\conone}{\eabstr{\varone}{\procone}}}
{\wfjep{\conone,\bang{\varone}}{\procone}}
&
\hspace{10pt}
&
\infer
{\wfjev{\conone}{\sabstr{\varone}{\procone}}}
{\wfjep{\conone,\sang{\varone}}{\procone}}
\end{array}
$$
\vspace{2pt}
$$
\begin{array}{ccccc}
\infer
{\wfjev{\conone}{\sabstr{\varone}{\procone}}}
{\wfjep{\conone,\dang{\varone}}{\procone}}
&
\hspace{10pt}
&
\infer
{\wfjev{\bang{\conone},\dies{\contwo}}{\bang{\valueone}}}
{\wfjev{\conone}{\valueone}}
&
\hspace{10pt}
&
\infer
{\wfjev{\sang{\conone},\dies{\contwo}}{\sang{\valueone}}}
{\wfjev{\conone}{\valueone}}
\end{array}
$$
\vspace{7pt}
\end{minipage}}
\end{center}
\caption{Processes and values in \eshopi{\inpchans}.}\label{fig:eshopiwff}
\end{figure}
The well-formation rules of \eshopi{\inpchans} are considerably more complex than the ones of \shopi. 
Judgements have the form $\wfjep{\conone}{\procone}$ or
$\wfjev{\conone}{\valueone}$, where a variable $\varone$ can occur in $\conone$
in one of five different forms:
\begin{varitemize}
\item
  As either $\varone$, $\bang{\varone}$ or $\dies{\varone}$: here the meaning
  is exactly the one from \shopi\ (see Section~\ref{sect:shopi}).
\item
  As $\sang{\varone}$: the variable $\varone$ then appears exactly once in $\procone$, in
  the scope of a spawning operator $\Box$.
\item
  As $\dang{\varone}$: $\varone$ occurs at least once in $\procone$, once in the
  scope of a $\Box$ operator (itself part of the continuation for an input channel),
  and possibly many times at depth $0$.
\end{varitemize}
A variable marked as $\dang{\varone}$ can ``absorb'' the same
variable declared as $\dies{\varone}$ in binary well-formation rules (i.e.
the ones for applications, outputs, etc.). Note 
the special well-formation rules  that are only applicable with an
 input channel: in that case a portion of the context 
$\sang{\contwo}$ becomes $\dang{\contwo}$.

The operational semantics is obtained adding to Figure~\ref{fig:oslhopi} the following
two rules:
$$
\begin{array}{ccc}
\infer
{\osl{\para{\out{\chanone}{\sang{\procthree}}{\procone}}{\sinp{\chanone}{\varone}{\proctwo}}}
{\para{\procone}{\subst{\proctwo}{\varone}{\procthree}}}}
{}
&
\hspace{10pt}
&
\infer
{\osl{\app{(\sabstr{\varone}{\procone})}{\sang{\proctwo}}}{\subst{\procone}{\varone}{\proctwo}}}
{}
\end{array}
$$
As expected,
\begin{lemma}[Subject Reduction]
If $\wfjep{\;}{\procone}$ and $\osl{\procone}{\proctwo}$, then $\wfjep{\;}{\proctwo}$.
\end{lemma}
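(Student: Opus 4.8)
The plan is to follow, almost line by line, the proofs of the two earlier subject‑reduction lemmas (for \lhopi\ and \shopi), the only genuinely new content being the two spawning reduction rules and the fact that a context may now mark a variable in five different ways. I would first establish two auxiliary facts and a substitution lemma, and then argue by induction on the derivation of $\osl{\procone}{\proctwo}$.

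For the preliminaries: (a) \emph{Structural congruence preserves well‑formedness}, i.e.\ $\congr{\procone}{\proctwo}$ and $\wfjep{\conone}{\procone}$ imply $\wfjep{\conone}{\proctwo}$; each generator of $\equiv$ leaves the free/bound‑variable structure and the box nesting unchanged, associativity and commutativity of $\para{}{}$ are matched by the freedom, in the binary well‑formation rules, in how the linear/$\bang{}$/$\sang{}$ part of a context is split and how the $\dies{}$‑ and $\dang{}$‑parts are shared, and the restriction laws are trivial since contexts never mention channels. (b) \emph{Weakening}: if $\wfjev{\conone}{\valueone}$ (resp.\ $\wfjep{\conone}{\procone}$) and $\varfour$ is fresh, then $\wfjev{\conone,\dies{\varfour}}{\valueone}$ (resp.\ $\wfjep{\conone,\dies{\varfour}}{\procone}$), by an easy induction, the extra $\dies{}$‑variable being absorbed at the axioms. (c) \emph{Substitution}: fixing a closed value $\wfjev{\emcon}{\valuetwo}$, each of $\wfjep{\conone,\varone}{\procone}$, $\wfjep{\conone,\bang{\varone}}{\procone}$, $\wfjep{\conone,\dies{\varone}}{\procone}$, $\wfjep{\conone,\sang{\varone}}{\procone}$, $\wfjep{\conone,\dang{\varone}}{\procone}$ implies $\wfjep{\conone}{\subst{\procone}{\varone}{\valuetwo}}$, and similarly with a value in place of $\procone$. (Only closed $\valuetwo$ is needed since every substitution performed by $\osrel$ plugs in a closed value.) This is proved by a single induction on the typing derivation. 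The linear and $\dies{}$ cases are exactly as in \shopi, using (b) at the axioms. For $\bang{\varone}$ and for $\sang{\varone}$ the marked variable is simply carried along until the derivation reaches the $\bang{}$‑box, resp.\ $\sang{}$‑box, rule that introduced it, where $\varone$ is linear; there the claim reduces to the already‑treated linear case applied to the strictly smaller box contents. For $\dang{\varone}$ the variable is split along the binary rules exactly as those rules prescribe — as $\dang{}$ on one premise and as $\dies{}$ on the other — behaving like a $\dies{}$‑variable on the depth‑$0$ side, and when the derivation traverses one of the four input‑channel rules (where $\sang{\contwo}$ in the premise becomes $\dang{\contwo}$ below) the remaining single boxed occurrence is handled by appealing to the $\sang{\varone}$ case on the continuation. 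All recursive appeals are to strictly smaller derivations, so the induction is well founded.

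With these in hand, I would do induction on the derivation of $\osl{\procone}{\proctwo}$ from $\wfjep{\;}{\procone}$. The contextual rules (parallel, restriction, and the $\equiv$‑closure rule) are routine: since $\osrel$ acts on closed terms, inverting the parallel or restriction well‑formation rule on the closed conclusion forces the empty context on the active subterm, so the induction hypothesis applies and the conclusion is reassembled with the same rule, using (a) for the two $\equiv$‑premises. Each of the four pre‑existing redexes (linear and $\bang{}$ communication, linear and $\bang{}$ application) is dispatched by inverting the relevant well‑formation rules on the closed redex — whichever of the applicable input/abstraction rules was used, inversion with an empty outer context yields a premise of the form $\wfjep{\varone}{\procfour}$, $\wfjep{\bang{\varone}}{\procfour}$ or $\wfjep{\dies{\varone}}{\procfour}$, and a value transmitted as $\bang{\valueone}$ satisfies $\wfjev{\emcon}{\valueone}$ by inversion of the $\bang{}$‑box rule — and then applying the matching case of (c). The two new rules are the point of the lemma but go the same way. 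For $\osl{\para{\out{\chanone}{\sang{\procthree}}{\procone}}{\sinp{\chanone}{\varone}{\proctwo}}}{\para{\procone}{\subst{\proctwo}{\varone}{\procthree}}}$, inverting the parallel and output rules on the closed redex gives $\wfjep{\emcon}{\procone}$ and $\wfjev{\emcon}{\sang{\procthree}}$, hence $\wfjev{\emcon}{\procthree}$ by inversion of the $\sang{}$‑box rule, while inverting the $\sinp{}$ rule gives $\wfjep{\sang{\varone}}{\proctwo}$ or $\wfjep{\dang{\varone}}{\proctwo}$; the substitution lemma then yields $\wfjep{\emcon}{\subst{\proctwo}{\varone}{\procthree}}$, and the parallel rule recombines. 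The rule $\osl{\app{(\sabstr{\varone}{\procone})}{\sang{\proctwo}}}{\subst{\procone}{\varone}{\proctwo}}$ is identical, inverting the application rule and the $\sabstr{}$ rules.

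I expect the substitution lemma, and specifically its $\dang{\varone}$ case, to be the only real obstacle: one has to check that a derivation of $\wfjep{\conone,\dang{\varone}}{\procone}$ genuinely decomposes in the shape the rules suggest — $\dang{}$ propagated along the unique branch leading to the input‑channel continuation that carries the boxed occurrence of $\varone$, and $\dies{}$‑copies splitting off toward the depth‑$0$ occurrences — so that plugging the closed value in produces exactly depth‑$0$ copies where $\dies{}$‑occurrences were expected and a single boxed copy on that continuation, keeping the $\dang{}/\sang{}$ input‑channel bookkeeping intact. Everything else is routine and entirely parallel to the \shopi\ development.
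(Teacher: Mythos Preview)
Your proposal is correct, and in fact the paper provides no proof of this lemma at all: it merely prefaces the statement with the words ``As expected'', treating the result as a routine extension of the subject‑reduction results for \lhopi\ and \shopi. Your sketch supplies precisely the argument the paper leaves implicit --- a substitution lemma covering all five context markings, followed by induction on the reduction derivation --- and you have correctly identified the $\dang{\varone}$ case of the substitution lemma as the only place where genuine care is needed.
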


The process $\lserver$ is a \eshopi{\inpchans}\ process once $\lcompon$ is
considered as a spawned process and $\chantwo\in\inpchans$: define
\begin{align*}
\elserver&=\rest{\chanone}{(\para{\app \elcompon{(\bang{\star})}}{\outwc{\chanone}{\sang{\elcompon}}})};\\
\elcompon&=\eabstr{\varthree}{( 
\sinp{\chanone}{\varone}{(\para{\einp{\chantwo}{\vartwo}
    {\out{\chanthree}{\bang{\vartwo}} {\outwc{\chanone}{\sang{\varone}}}}}{\app{\varone}{(\bang{\star})}})})}.
\end{align*}
and consider the following derivations:
$$
\infer
{\wfjep{\emcon}{\rest{\chanone}{(\para{\app \elcompon{(\bang{\star})}}{\outwc{\chanone}{\sang{\elcompon}}})}}}
{
  \infer
  {\wfjep{\emcon}\para{\app \elcompon{(\bang{\star})}}{\outwc{\chanone}{\sang{\elcompon}}}}
  {
    \infer
    {\wfjep{\emcon}{\app{\elcompon}{(\bang{\star})}}}
    {
      \wfjev{\emcon}{\elcompon}
      &
      \infer
      {\wfjev{\emcon}{\bang{\star}}}
      {
        \wfjev{\emcon}{\star}
      }
    }
    &
    \infer
    {\wfjep{\emcon}{\outwc{\chanone}{\sang{\elcompon}}}}
    {
      \infer
      {\wfjev{\emcon}{\sang{\elcompon}}}
      {
        \wfjev{\emcon}{\elcompon}
      }
    }    
  }
}
$$
$$
\infer
{\wfjev{\emcon}{\eabstr{\varthree}{\sinp{\chanone}{\varone}{(\para{\einp{\chantwo}{\vartwo}
    {\out{\chanthree}{\bang{\vartwo}} {\outwc{\chanone}{\sang{\varone}}}}}{\app{\varone}{(\bang{\star})}})}}}}
{
  \infer
  {\wfjep{\dies{\varthree}}{\sinp{\chanone}{\varone}{(\para{\einp{\chantwo}{\vartwo}
    {\out{\chanthree}{\bang{\vartwo}} {\outwc{\chanone}{\sang{\varone}}}}}{\app{\varone}{(\bang{\star})}})}}}
  {
    \infer
    {
      \wfjep{\dies{\varthree},\dang{\varone}}
      {\para{\einp{\chantwo}{\vartwo}
        {\out{\chanthree}{\bang{\vartwo}} {\outwc{\chanone}{\sang{\varone}}}}}{\app{\varone}{(\bang{\star})}}}}
    {
      \infer
        {\wfjep{\dies{\varthree},\dang{\varone}}
           {\einp{\chantwo}{\vartwo}
             {\out{\chanthree}{\bang{\vartwo}} {\outwc{\chanone}{\sang{\varone}}}}}}
        {
          \infer
          {\wfjep{\dies{\varthree},\sang{\varone},\bang{\vartwo}}{\out{\chanthree}{\bang{\vartwo}} {\outwc{\chanone}{\sang{\varone}}}}}
          {
            \infer
            {\wfjep{\dies{\varthree},\sang{\varone}}{\outwc{\chanone}{\sang{\varone}}}}
            {
              \infer
              {\wfjev{\dies{\varthree},\sang{\varone}}{\sang{\varone}}}
              {
                \infer
                {\wfjev{\varone}{\varone}}
                {}
              }
            }
            &
            \infer
            {\wfjev{\bang{\vartwo}}{\bang{\vartwo}}}
            {
              \infer
              {\wfjev{\vartwo}{\vartwo}}
              {}
            }
          }
          &
          \chantwo\in\inpchans
        }
      &
      \infer
        {\wfjep{\dies{\varone}}{\app{\varone}{(\bang{\star})}}}
        {
          \infer
          {\wfjev{\dies{\varone}}{\varone}}
          {}
          &
          \infer
          {\wfjev{\emcon}{\bang{\star}}}
          {\wfjev{\emcon}{\star}}
        }
    }
  }
}
$$

The use in \eshopi{\inpchans} of a distinct set of input channels
may still be seen as rigid. For instance, it prevents from
accepting $\elserver$ in parallel with a client of the server itself
(because the client uses the request channel of the server in output);
similarly, it prevents from accepting reentrant servers (servers
that can invoke themselves). 
As pointed out earlier, we are mainly interested in techniques
capable of  ensuring polynomial bounds on \emph{components} of concurrent
systems (so for instance, bounds on the server, rather than on the composition
of the server and a client). In any case, 
this paper represents a preliminary
investigation, and further refinements or extensions of
\eshopi{\inpchans} may well be possible.
\subsection{Feasible Termination}
The proof of feasible termination for \eshopi{\inpchans} is similar in structure to the one
for \shopi\ (see Section~\ref{sect:shopifr}). However, some additional difficulties
due to the presence of spawning arise.

The auxiliary notions we needed in the proof of feasible termination for \shopi\ can
be easily extended to \eshopi{\inpchans} as follows:
The architecture of the soundness proof is similar to the one for linear processes.
The box depth, duplicability factor and weight of a process are defined as for soft processes, plus:

\vspace{-10pt}
{\scriptsize
\begin{align*}
\bd{\sabstr{\varone}{\procone}}&=\bd{\procone}; &
\df{\sabstr{\varone}{\procone}}&=\max\{\df{\procone},\nfo{\varone}{\procone}\}; &
\weip{\sabstr{\varone}{\procone}}{n}&=\weip{\procone}{n}; 
\\
\bd{\sang{\valueone}}&=\bd{\valueone}+1; &
\df{\sang{\valueone}}&=\df{\valueone}; &
\weip{\sang{\valueone}}{n}&=n\cdot\weip{\valueone}{n}+1; 
\\
\bd{\sinp{\chanone}{\varone}{\procone}}&=\bd{\procone}; &
\df{\sinp{\chanone}{\varone}{\procone}}&=\max\{\df{\procone},\nfo{\varone}{\procone}\}; &
\weip{\sinp{\chanone}{\varone}{\procone}}{n}&=\weip{\procone}{n}+1.
\end{align*}}
\vspace{-15pt}

\noindent
Informally, the spawning operator $\Box$ acts as $!$ in all the definitions above.
The weight $\wei{\procone}$, still defined as $\weip{\procone}{\df{\procone}}$ is 
again an upper bound to the size of $\procone$, but is not guaranteed to decrease
at any reduction step. In particular, spawning can make $\wei{\procone}$ bigger.
As a consequence, two new auxiliary notions are needed. The first one is
similar to the weight of processes and values, but is computed without taking
into account whatever happens after an input on a channel $\chanone\in\inpchans$. 
It is parametric on a natural number $n$ and is defined as follows:

\vspace{-10pt}
{\scriptsize\begin{align*}
\webip{\star}{n}=\webip{\varone}{n}=\webip{\emproc}{n}&=1 &
\webip{\abstr{\varone}{\procone}}{n}=\webip{\eabstr{\varone}{\procone}}{n}=\webip{\sabstr{\varone}{\procone}}{n}&=\webip{\procone}{n}\\
\webip{\bang{\valueone}}{n}=\webip{\sang{\valueone}}{n}&=n\cdot\webip{\valueone}{n}+1 &
\webip{\para{\procone}{\proctwo}}{n}&=\webip{\procone}{n}+\webip{\proctwo}{n}+1\\
\webip{\inp{\chanone}{\varone}{\procone}}{n}=\webip{\einp{\chanone}{\varone}{\procone}}{n}=\webip{\sinp{\chanone}{\varone}{\procone}}{n}&=
  \left\{
    \begin{array}{ll}
      0 & \mbox{if $\chanone\in\inpchans$}\\
      \webip{\procone}{n}+1 & \mbox{otherwise}\\
    \end{array}
  \right. &
\webip{\out{\chanone}{\valueone}{\procone}}{n}&=\webip{\valueone}{n}+\webip{\procone}{n} \\
\webip{\rest{\chanone}{\procone}}{n}&=\webip{\procone}{n} &
\webip{\app{\procone}{\proctwo}}{n}&=\webip{\procone}{n}+\webip{\proctwo}{n}+1
\end{align*}}
\vspace{-15pt}

\noindent
The \emph{weight before input} $\webi{\procone}$ of a process $\procone$ is simply 
$\webip{\procone}{\df{\procone}}$. As we will see, $\webi{\procone}$ \emph{is} guaranteed
to decrease at any reduction step, but this time it is not an upper bound to the size of
the underlying process. The second auxiliary notion captures the potential growth of
processes due to spawning and is again parametric on a natural number $n$:

\vspace{-10pt}
{\scriptsize\begin{align*}
\pgrp{\star}{n}=\pgrp{\varone}{n}=\pgrp{\emproc}{n}&=0 &
\pgrp{\abstr{\varone}{\procone}}{n}=\pgrp{\eabstr{\varone}{\procone}}{n}=\pgrp{\sabstr{\varone}{\procone}}{n}&=\pgrp{\procone}{n}\\
\pgrp{\bang{\valueone}}{n}&=n\cdot\pgrp{\valueone}{n} &
\pgrp{\sang{\valueone}}{n}&=n\cdot\pgrp{\valueone}{n}+n\cdot\weip{\valueone}{n}\\
\pgrp{\para{\procone}{\proctwo}}{n}&=\pgrp{\procone}{n}+\pgrp{\proctwo}{n} &
\pgrp{\inp{\chanone}{\varone}{\procone}}{n}=\pgrp{\einp{\chanone}{\varone}{\procone}}{n}=\pgrp{\sinp{\chanone}{\varone}{\procone}}{n}&=
  \left\{
    \begin{array}{ll}
      0 & \mbox{if $\chanone\in\inpchans$}\\
      \pgrp{\procone}{n} & \mbox{otherwise}\\
    \end{array}
  \right.\\
\pgrp{\out{\chanone}{\valueone}{\procone}}{n}&=\pgrp{\valueone}{n}+\pgrp{\procone}{n} &
\pgrp{\rest{\chanone}{\procone}}{n}&=\pgrp{\procone}{n}\\
\pgrp{\app{\valueone}{\valuetwo}}{n}&=\pgrp{\valueone}{n}+\pgrp{\valuetwo}{n}
\end{align*}}
\vspace{-10pt}

\noindent
Again, the \emph{potential growth} $\pgr{\procone}$ of a process $\procone$ is 
$\pgrp{\procone}{\df{\procone}}$.
Proposition~\ref{prop:shopisc}, Lemma~\ref{lemma:shopidfni} and Lemma~\ref{lemma:shopiweiub} from
Section~\ref{sect:shopifr} continue to hold for \eshopi{\inpchans}, and
their proofs remain essentially unchanged. Proposition~\ref{prop:shopiweidec} is true only
if the weight before input replaces the weight:
\begin{proposition}\label{prop:eshopiweidec}
If $\wfjsp{\emcon}{\proctwo}$ and $\osl{\proctwo}{\procone}$, then
$\webi{\proctwo}>\webi{\procone}$.
\end{proposition}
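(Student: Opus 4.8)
The plan is to replay the proof of Proposition~\ref{prop:shopiweidec}, with the weight before input $\webip{\cdot}{n}$ playing the role that $\weip{\cdot}{n}$ played there. Two facts about $\webip{\cdot}{n}$ make this possible: it charges $0$ to everything guarded by an input on a channel of $\inpchans$; and — because outputs are forbidden on such channels, and because $\osl{}{}$ never reduces underneath a prefix — every redex actually contracted by a step of $\osl{}{}$, together with the component containing it, is charged its full $\weip{\cdot}{n}$-style weight, so the base-rule cases look just like those in Proposition~\ref{prop:shopiweidec}. I would first record two routine observations: $\webip{\procone}{m}\le\webip{\procone}{n}$ (and likewise for values) whenever $m\le n$, since the parameter occurs only, and positively, in the $\bang{\cdot}$ and $\sang{\cdot}$ clauses; and $\df{\proctwo}\ge\df{\procone}$ whenever $\osl{\proctwo}{\procone}$, the \eshopi{\inpchans} form of Lemma~\ref{lemma:shopidfni}, which the excerpt grants. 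Because of the latter — and because the duplicability factor can drop strictly inside an active sub-process — it is convenient (and in fact needed for the contextual cases) to prove by induction the \emph{strengthened} statement: if $\osl{\proctwo}{\procone}$ then $\webip{\proctwo}{n}>\webip{\procone}{n}$ for every $n\ge\df{\proctwo}$. The Proposition then follows from $\webi{\procone}=\webip{\procone}{\df{\procone}}\le\webip{\procone}{\df{\proctwo}}<\webip{\proctwo}{\df{\proctwo}}=\webi{\proctwo}$.

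The key preliminary is a family of substitution lemmas for $\webip{\cdot}{n}$. For the markings $\varone$, $\dies{\varone}$, $\bang{\varone}$ (on processes and on values) the statements and their proofs are word for word those inside Proposition~\ref{prop:shopiweidec}, since $\webip{\cdot}{n}$ agrees with $\weip{\cdot}{n}$ on every construct that can surround a linear, a $\dies$- or a $\bang$-occurrence. Two clauses are new; with $\wfjev{\emcon}{\valueone}$ and $n\ge m\ge 1$:
\begin{varitemize}
\item if $\tdone:\wfjep{\conone,\sang{\varthree}}{\procfour}$ then $\webip{\subst{\procfour}{\varthree}{\valueone}}{m}\le\webip{\procfour}{n}+n\cdot\webip{\valueone}{n}$ (and the analogue with a value in place of $\procfour$);
\item if $\tdone:\wfjep{\conone,\dang{\varthree}}{\procfour}$ then $\webip{\subst{\procfour}{\varthree}{\valueone}}{m}\le\webip{\procfour}{n}+\nfo{\varthree}{\procfour}\cdot\webip{\valueone}{n}$ (and the analogue for values).
\end{varitemize}
Both are inductions on $\tdone$, and they turn on the shape a $\Box$-scope is forced into by Figure~\ref{fig:eshopiwff}: the rules producing $\sang{\conone}$ or $\dang{\contwo}$ take an all-linear premise context, so a $\sang$-marked variable's unique occurrence sits at box depth exactly one (inside a single $\Box$, at depth $0$ within it), and hence contributes at most a factor $n$ to $\webip{\cdot}{n}$ — exactly $n$ if that $\Box$ is not under an $\inpchans$-guard, and $0$ if it is. For the $\dang$ case one uses in addition the side conditions of the $\dang$-introducing rules: the privileged occurrence is always inside the continuation of an input on a channel of $\inpchans$, hence contributes $0$, while the remaining occurrences are all at depth $0$ and number fewer than $\nfo{\varthree}{\procfour}$, giving the stated bound. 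Isolating these geometric invariants is the step I expect to be the main obstacle; once they are in hand, the inductions are bookkeeping.

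With the substitution lemmas available, run the induction on the derivation of $\osl{\proctwo}{\procone}$. The four non-spawning reduction rules and the contextual rules for $\para{}{}$, $\rest{}{}$ and $\congrs$ are dispatched exactly as in Proposition~\ref{prop:shopiweidec}: for $\congrs$ one invokes the \eshopi{\inpchans} version of Proposition~\ref{prop:shopisc}; for $\para{}{}$ and $\rest{}{}$ one uses the strengthened induction hypothesis together with monotonicity in $n$ (here it matters that reduction never occurs under a prefix, so the active component is charged its full weight); and in every communication redex the subject channel carries an output, hence lies outside $\inpchans$, so $\webip{\cdot}{n}$ treats the redex as $\weip{\cdot}{n}$ does and the original computations go through unchanged.

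It remains to treat the two spawning rules $\osl{\app{(\sabstr{\varthree}{\procfour})}{\sang{\valueone}}}{\subst{\procfour}{\varthree}{\valueone}}$ and $\osl{\para{\out{\chanone}{\sang{\valueone}}{\procthree}}{\sinp{\chanone}{\varthree}{\procfour}}}{\para{\procthree}{\subst{\procfour}{\varthree}{\valueone}}}$, where in the second $\chanone\notin\inpchans$ because of the output. Using $\webip{\sang{\valueone}}{n}=n\cdot\webip{\valueone}{n}+1$, the left-hand side carries a summand $n\cdot\webip{\valueone}{n}$ together with a strictly positive constant — $2$ for the application rule, and $3$ for the communication rule (the latter because the $\sinp{}{}$-prefix, being on a channel outside $\inpchans$, is charged $\webip{\procfour}{n}+1$ and not $0$). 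The contractum, by the new substitution lemmas, is bounded by $\webip{\procfour}{n}+n\cdot\webip{\valueone}{n}$ (plus $\webip{\procthree}{n}+1$ in the communication case) when $\varthree$ is $\sang$-marked, and by the same expression with $\nfo{\varthree}{\procfour}\cdot\webip{\valueone}{n}$ replacing $n\cdot\webip{\valueone}{n}$ when $\varthree$ is $\dang$-marked. Since $\nfo{\varthree}{\procfour}$ is one of the arguments of the max defining $\df{\sabstr{\varthree}{\procfour}}$ (resp.\ $\df{\sinp{\chanone}{\varthree}{\procfour}}$), and that subterm occurs in $\proctwo$, we get $\nfo{\varthree}{\procfour}\le\df{\proctwo}\le n$, whence $\nfo{\varthree}{\procfour}\cdot\webip{\valueone}{n}\le n\cdot\webip{\valueone}{n}$; in each sub-case the strictly positive constant then yields $\webip{\proctwo}{n}>\webip{\procone}{n}$, completing the induction and the proof.
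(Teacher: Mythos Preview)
Your proposal is correct and follows precisely the route the paper indicates: it states only that the proof is ``structured around appropriate substitution lemmas'' mirroring Proposition~\ref{prop:shopiweidec}, and you have carried this out in full, including the new $\sang{\cdot}$ and $\dang{\cdot}$ clauses whose crux --- that the $\Box$-occurrence governed by a $\dang$-variable sits behind an $\inpchans$-input and therefore contributes $0$ to $\webip{\cdot}{n}$ --- is exactly the observation the extended calculus is built on. Your explicit strengthening to arbitrary $n\geq\df{\proctwo}$ is a welcome tightening of the informal treatment in the paper's own proof of Proposition~\ref{prop:shopiweidec}, where the contextual $\para{}{}$-case is handled somewhat loosely.
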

The potential growth of a process $\procone$ cannot increase during reduction. Moreover, 
the weight can increase, but at most by the decrease in the potential growth. Formally:
\begin{proposition}\label{prop:eshopiwebipgr}
If $\wfjsp{\emcon}{\proctwo}$ and $\osl{\proctwo}{\procone}$, then
$\pgr{\proctwo}\geq\pgr{\procone}$ and $\wei{\proctwo}+\pgr{\proctwo}\geq\wei{\procone}+\pgr{\procone}$. 
\end{proposition}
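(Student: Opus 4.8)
The plan is to follow the proof of Proposition~\ref{prop:shopiweidec}, but to track the pair $(\weip{\cdot}{n},\pgrp{\cdot}{n})$ rather than the weight alone. The first step is a family of substitution lemmas: for a closed value $\valueone$ with $\wfjev{\emcon}{\valueone}$ and $n\ge m\ge 1$, I would bound simultaneously $\weip{\subst{\procthree}{\varone}{\valueone}}{m}$ and $\pgrp{\subst{\procthree}{\varone}{\valueone}}{m}$ (and the analogous statements on values), with one clause for each of the five ways $\varone$ may be declared in the context. For $\varone$, $\bang{\varone}$ and $\dies{\varone}$ the weight clauses are verbatim those of Section~\ref{sect:shopifr}, and the $\pgrp{\cdot}{n}$-clauses are read off from the fact that $\pgrp{\cdot}{n}$ obeys the same recursion as $\weip{\cdot}{n}$ with the leaf constant $1$ replaced by $0$ and with an extra summand $n\cdot\weip{\cdot}{n}$ hung on each $\Box$. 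The genuinely new clauses are for $\sang{\varone}$ and $\dang{\varone}$: in both, $\varone$ has a single occurrence under exactly one $\Box$ and no $!$ (this uses the implicit linearity side-condition on the rule introducing $\sang{\cdot}$), so after substitution that occurrence adds at most $n\cdot\weip{\valueone}{n}$ to the weight; in the $\dang{\varone}$ clause there are in addition at most $\nfo{\varone}{\procthree}\le\df{\procthree}$ occurrences at depth $0$, each adding at most $\weip{\valueone}{n}$. The structural fact to be isolated here, by an easy auxiliary induction on well-formation derivations, is that in $\wfjep{\conone,\dang{\varone}}{\procthree}$ the boxed occurrence of $\varone$ always lies inside the body of an input prefix on a channel of $\inpchans$; hence it contributes $0$ to $\pgrp{\cdot}{n}$ (because $\pgrp{\inp{\chanone}{\varone}{\procone}}{n}=0$ when $\chanone\in\inpchans$), and this is the mechanism by which the potential growth absorbs the weight's blow-up.

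With the substitution lemmas in hand, and using that Lemma~\ref{lemma:shopidfni} extends (so $\df{\proctwo}\ge\df{\procone}$ along a reduction) while $\weip{\cdot}{n}$ and $\pgrp{\cdot}{n}$ are monotone in $n$ (so passing from parameter $\df{\procone}$ to $\df{\proctwo}$ in an estimate is harmless), the proposition follows by induction on the derivation of $\osl{\proctwo}{\procone}$. For the linear and nonlinear communication/application rules and the contextual rules ($\para{\cdot}{\cdot}$, $\rest{\chanone}{\cdot}$, and the $\equiv$-closure) the estimate of Proposition~\ref{prop:shopiweidec} applies unchanged and gives $\wei{\proctwo}>\wei{\procone}$, while the matching $\pgrp{\cdot}{n}$-clauses (carrying no extra summand for these declarations) give $\pgr{\proctwo}\ge\pgr{\procone}$; summing yields the claim. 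The interesting rules are the two spawning rules, treated identically; take $\osl{\app{(\sabstr{\varone}{\procthree})}{\sang{\valuetwo}}}{\subst{\procthree}{\varone}{\valuetwo}}$ with $\wfjev{\emcon}{\valuetwo}$. Writing $n=\df{\proctwo}$ and using $\weip{\sang{\valuetwo}}{n}=n\cdot\weip{\valuetwo}{n}+1$ and $\pgrp{\sang{\valuetwo}}{n}=n\cdot\pgrp{\valuetwo}{n}+n\cdot\weip{\valuetwo}{n}$, the source has weight $\weip{\procthree}{n}+n\cdot\weip{\valuetwo}{n}+2$ and potential growth $\pgrp{\procthree}{n}+n\cdot\pgrp{\valuetwo}{n}+n\cdot\weip{\valuetwo}{n}$. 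If $\varone$ is declared $\sang{\varone}$, the weight of $\subst{\procthree}{\varone}{\valuetwo}$ is at most $\weip{\procthree}{n}+n\cdot\weip{\valuetwo}{n}$, so the weight alone strictly decreases, and $\pgr{\proctwo}\ge\pgr{\procone}$ follows from the $\pgrp{\cdot}{n}$-clause. If $\varone$ is declared $\dang{\varone}$ then, using $\nfo{\varone}{\procthree}\le\df{\sabstr{\varone}{\procthree}}\le n$, the weight of $\subst{\procthree}{\varone}{\valuetwo}$ is at most $\weip{\procthree}{n}+2n\cdot\weip{\valuetwo}{n}$ — an increase of at most $n\cdot\weip{\valuetwo}{n}$ — whereas its potential growth is at most $\pgrp{\procthree}{n}+n\cdot\pgrp{\valuetwo}{n}$, the boxed copy contributing $0$ because it sits behind an $\inpchans$-input; so the potential growth decreases by at least $n\cdot\weip{\valuetwo}{n}$, the two estimates balance, and both $\pgr{\proctwo}\ge\pgr{\procone}$ and $\wei{\proctwo}+\pgr{\proctwo}\ge\wei{\procone}+\pgr{\procone}$ hold.

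The main obstacle is precisely the $\dang{\varone}$ clause of the substitution lemma and the structural invariant it rests on: one must establish simultaneously that the unique boxed occurrence of a $\dang{}$-variable contributes at most $n\cdot\weip{\valueone}{n}$ to the post-substitution weight (which needs that it sits under exactly one $\Box$ and no $!$) and contributes nothing to the post-substitution potential growth (which needs the invariant that it always lies behind an input on an $\inpchans$-channel). Making the additive constants line up so that these two estimates genuinely balance — rather than agreeing only up to a bounded error — is the one piece of delicate bookkeeping; everything else reduces to routine inductions on the relevant derivations.
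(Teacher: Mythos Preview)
Your proposal is correct and matches the approach the paper indicates: the paper elides the proof but states it is ``structured around appropriate substitution lemmas'' following the \shopi\ pattern of Section~\ref{sect:shopifr}, and that is precisely what you carry out---substitution lemmas for each of the five declaration modes, followed by induction on the reduction derivation, with the key observation being that for a $\dang{}$-declared variable the unique $\Box$-occurrence lies behind an $\inpchans$-input and hence contributes nothing to $\pgrp{\cdot}{n}$. One cosmetic slip: in your first paragraph you bound $\nfo{\varone}{\procthree}$ by $\df{\procthree}$, but $\varone$ is free in $\procthree$; the correct bound is $\nfo{\varone}{\procthree}\le\df{\sabstr{\varone}{\procthree}}\le n$, which is exactly what you use later in the detailed case.
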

Polynomial bounds on all the attributes of processes we have defined can be proved: 
\begin{proposition}\label{prop:bounds}
For every process $\procone$, $\wei{\procone}\leq\size{\procone}^{\bd{\procone}+1}$,
$\webi{\procone}\leq\size{\procone}^{\bd{\procone}+1}$ and $\pgr{\procone}\leq\bd{\procone}\wei{\procone}$.
\end{proposition}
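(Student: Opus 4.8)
My plan is to follow the same template as the corresponding Proposition for \shopi\ in Section~\ref{sect:shopifr}: a single structural induction on processes and values, carried out simultaneously. As there, the bounds stated in terms of $\size{\procone}^{\bd{\procone}+1}$ are not directly amenable to induction, because the duplicability factor of a subterm need not equal that of the enclosing term. So I would first establish a parametric strengthening: for every natural number $n\geq 1$, every process $\procone$ and every value $\valueone$,
\[
\weip{\procone}{n}\leq\size{\procone}\cdot n^{\bd{\procone}},\qquad
\webip{\procone}{n}\leq\size{\procone}\cdot n^{\bd{\procone}},\qquad
\pgrp{\procone}{n}\leq\bd{\procone}\cdot\weip{\procone}{n},
\]
together with the three analogous inequalities with $\valueone$ in place of $\procone$. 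The statement of the proposition is then recovered by instantiating $n:=\df{\procone}$ and invoking the easy facts $1\leq\df{\procone}\leq\size{\procone}$ (the lower bound is already noted in the proof of Lemma~\ref{lemma:shopiweiub}; the upper bound is an immediate induction, since the number of free occurrences of a variable never exceeds the number of symbols): indeed $\wei{\procone}=\weip{\procone}{\df{\procone}}\leq\size{\procone}\cdot\df{\procone}^{\bd{\procone}}\leq\size{\procone}^{\bd{\procone}+1}$, and likewise for $\webi{\procone}$ and $\pgr{\procone}$.

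For the first two inequalities the induction is routine. Both sides equal $1$ in the base cases. For $\bang{\valueone}$ and $\sang{\valueone}$ the inductive step uses $n\cdot\size{\valueone}\cdot n^{\bd{\valueone}}=\size{\valueone}\cdot n^{\bd{\valueone}+1}$, and the extra summand $1$ on the left is absorbed by the slack $n^{\bd{\bang{\valueone}}}\geq 1$ provided by $n\geq 1$ together with the extra symbol recorded in $\size{\bang{\valueone}}=\size{\valueone}+1$. For $\para{\procone}{\proctwo}$, $\out{\chanone}{\valueone}{\procone}$ and $\app{\valueone}{\valuetwo}$ the $+1$ on the left is matched by the extra symbol on the right, and $n^{\bd{\procone}},n^{\bd{\proctwo}}\leq n^{\max\{\bd{\procone},\bd{\proctwo}\}}$ settles the exponents; the remaining constructs (abstractions and inputs of all kinds, restriction) merely copy or decrease the measures. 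For the bound on $\webi{\cdot}$ one can moreover shortcut the argument entirely: a one-line induction shows $\webip{\procone}{n}\leq\weip{\procone}{n}$ for all $n$ --- the only clause in which the two definitions differ replaces $\weip{\procone}{n}+1$ by the smaller value $0$, for inputs on channels in~$\inpchans$ --- whence $\webi{\procone}\leq\wei{\procone}\leq\size{\procone}^{\bd{\procone}+1}$.

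The third inequality is the only place requiring genuine care, since it relates two measures that both grow through the recursion. For parallel composition, output and application, and for the remaining constructs, the inductive step reduces immediately to the inductive hypothesis once one observes that $\bd{\procone}$ equals, or is the maximum of, the box depths of the immediate subterms, so that each coefficient $\bd{\cdot}$ appearing inside $\pgrp{\cdot}{n}$ is dominated by the outer $\bd{\cdot}$. The delicate clause is $\sang{\valueone}$: from the inductive hypothesis $\pgrp{\valueone}{n}\leq\bd{\valueone}\cdot\weip{\valueone}{n}$ one computes
\begin{align*}
\pgrp{\sang{\valueone}}{n}&=n\cdot\pgrp{\valueone}{n}+n\cdot\weip{\valueone}{n}\leq n(\bd{\valueone}+1)\weip{\valueone}{n}\\
&\leq(\bd{\valueone}+1)\bigl(n\cdot\weip{\valueone}{n}+1\bigr)=\bd{\sang{\valueone}}\cdot\weip{\sang{\valueone}}{n},
\end{align*}
and the $\bang{\valueone}$ clause is the same computation with the term $n\cdot\weip{\valueone}{n}$ dropped. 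I expect this $\sang{\valueone}$ case --- more precisely, making the arithmetic of the parametric strengthening for $\pgr{\cdot}$ line up --- to be the main (and essentially the only) obstacle; everything else is bookkeeping of the same kind already carried out for \shopi.
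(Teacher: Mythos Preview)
Your proposal is correct and follows essentially the same approach as the paper. The paper elides the proof of this proposition entirely, saying only that its structure ``reflects the corresponding proofs for \shopi''; the \shopi\ analogue is proved by a one-line ``induction on $\procone$, enriching the thesis with an analogous statement for values,'' which is exactly your template, and your explicit parametric strengthening $\weip{\procone}{n}\leq\size{\procone}\cdot n^{\bd{\procone}}$ (and the companions for $\webip{\cdot}{n}$ and $\pgrp{\cdot}{n}$) is precisely what is needed to make that induction go through---the paper's terse hint hides this step, but any execution of the induction would require it.
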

And, as for \shopi, we get a polynomial bound in the number of reduction steps from any process:
\begin{theorem}
There is a family of polynomials $\{p_n\}_{n\in\mathbb{N}}$ such that for every process $\procone$ and for every
$m$, if $\oslp{\procone}{m}{\proctwo}$, then $m,\size{\proctwo}\leq p_{\bd{\procone}}(\size{\procone})$.
\end{theorem}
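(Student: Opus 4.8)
The plan is to derive the theorem by combining the three facts already established for \eshopi{\inpchans}: that the weight before input decreases strictly at every reduction step (Proposition~\ref{prop:eshopiweidec}), that the weight can only grow at the expense of the potential growth (Proposition~\ref{prop:eshopiwebipgr}), and that $\wei{\procone}$, $\webi{\procone}$ and $\pgr{\procone}$ are all polynomially bounded in $\size{\procone}$ for fixed box depth (Proposition~\ref{prop:bounds}). By the subject reduction lemma for \eshopi{\inpchans}, every process along a reduction from a well-formed closed $\procone$ is again well-formed and closed, and since $\wei{\cdot}$, $\webi{\cdot}$, $\pgr{\cdot}$ are invariant under $\equiv$ (the extension of Proposition~\ref{prop:shopisc}, which the text notes still holds), I may reason about a fixed sequence $\procone = \procthree_0 \rightarrow_\mathsf{L} \procthree_1 \rightarrow_\mathsf{L} \cdots \rightarrow_\mathsf{L} \procthree_m = \proctwo$ of well-formed closed processes.

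First I would bound $m$. By Proposition~\ref{prop:eshopiweidec} the chain $\webi{\procthree_0} > \webi{\procthree_1} > \cdots > \webi{\procthree_m}$ is strictly decreasing, and since $\webi{\cdot}$ is a non-negative integer this forces $m \leq \webi{\procthree_0} = \webi{\procone} \leq \size{\procone}^{\bd{\procone}+1}$, the last inequality by Proposition~\ref{prop:bounds}. Next I would bound $\size{\proctwo}$. By Lemma~\ref{lemma:shopiweiub} (which still holds) $\size{\proctwo} \leq \wei{\proctwo} \leq \wei{\proctwo} + \pgr{\proctwo}$; iterating the second inequality of Proposition~\ref{prop:eshopiwebipgr} along the sequence gives $\wei{\proctwo} + \pgr{\proctwo} \leq \wei{\procone} + \pgr{\procone}$; and by Proposition~\ref{prop:bounds} the right-hand side is at most $\size{\procone}^{\bd{\procone}+1} + \bd{\procone}\cdot\size{\procone}^{\bd{\procone}+1} = (\bd{\procone}+1)\,\size{\procone}^{\bd{\procone}+1}$. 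Putting $p_n(x) = (n+1)\,x^{n+1}$, both $m$ and $\size{\proctwo}$ are then bounded by $p_{\bd{\procone}}(\size{\procone})$, which is the required family of polynomials (one per box depth, exactly as for Theorem~\ref{theo:soundshopi}).

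The reason this needs a little more than a transcription of the \shopi\ argument is that, once spawning is allowed, no single potential function does the job at once: the quantity that dominates the size, $\wei{\procone}$, may \emph{increase} under a spawning step, while the quantity that decreases strictly, $\webi{\procone}$, is no longer an upper bound on the size. The proof therefore runs the termination count on $\webi{\cdot}$ and the size control on $\wei{\cdot}+\pgr{\cdot}$, and the correctness of this split is precisely the content of Propositions~\ref{prop:eshopiweidec} and~\ref{prop:eshopiwebipgr}. I accordingly expect the genuine difficulty to lie entirely in those two propositions — in the refined substitution lemmas for $\webi{\cdot}$ and $\pgr{\cdot}$, and in the two new reduction rules, where restricting spawning-carrying outputs to channels in $\inpchans$ is what makes $\webi{\cdot}$ strictly decrease while $\pgr{\cdot}$ never increases — whereas the theorem above is, granting them, the same short bookkeeping argument as Theorem~\ref{theo:soundshopi}, the only mild care needed being to observe that strict decrease of the integer-valued $\webi{\cdot}$ already caps the number of steps.
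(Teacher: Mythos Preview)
Your proof is correct and is exactly the argument the paper intends: the paper elides this proof, stating only that its structure mirrors Theorem~\ref{theo:soundshopi}, and your decomposition --- bounding $m$ via the strictly decreasing non-negative integer $\webi{\cdot}$ and bounding $\size{\proctwo}$ via $\wei{\proctwo}\leq\wei{\proctwo}+\pgr{\proctwo}\leq\wei{\procone}+\pgr{\procone}$ --- is precisely how Propositions~\ref{prop:eshopiweidec}, \ref{prop:eshopiwebipgr}, and~\ref{prop:bounds} are meant to combine, together with the extended Lemma~\ref{lemma:shopiweiub}. Your explicit choice $p_n(x)=(n+1)x^{n+1}$ is the natural one.
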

Proofs for the results above have been elided, due to space constraints. Their structure, however, reflects
the corresponding proofs for \shopi\ (see Section~\ref{sect:shopifr}). As an example, proofs of propositions
\ref{prop:eshopiweidec} and \ref{prop:eshopiwebipgr} are both structured around appropriate substitution lemmas.
\section{Conclusions}
Goal of this preliminary essay was to verify whether we could apply to process
algebras the technologies for resource control that have been developed in
the so-called ``light logics'' and have been successfully applied so far
to paradigmatic functional programming. We deliberately adopted a minimalistic approach: 
applications between processes restricted to values,
the simplest available logic, a purely linear language (i.e., no weakening/erasing
on non marked formulas), no types, no search for maximal expressivity. 
In this way the result of the experiment would have had a clear single outcome.
We believe this outcome is a clear positive, and that this paper demonstrates it. 

Several issues must be investigated further, of course, so that this first experiment may
become a solid contribution. First,
one may wonder whether other complexity conscious fragments of linear logic can be used in place 
of \SLL\ as guideline for box control. \SLL\ is handy as a first try, because of
its simplicity, but we do believe that analogous results could be obtained starting
from Light Affine Logic, designed by Asperti and Roversi~\cite{AspertiRoversi:TOCL02} after Girard's 
treatment of the purely linear case. This would also allow unrestricted erasing of
processes, leaving marked boxes only for duplication. 
Second, individuate a richer language of processes, still amenable to the soft (or light)
treatment. Section~\ref{sect:eshopi} suggests a possible direction, but many others are possible.
Third, the very interesting problem of studying the notion of complexity
class in the process realm. 

In the paper, we have proved polynomial bounds for
\shopi, obtained from the the Higher-Order $\pi$-calculus by imposing
constraints inspired by Soft Linear Logic.
We have then considered an extension of  \shopi, taking into account
features specific to processes, notably the existence of channels: 
in process calculi
a
reduction step does not need to be anonymous, as in the $\lambda$-calculus, but
may result from an interaction along a channel. 
An objective of the extension was to accept processes that are
programmed  to have unboundedly many external actions (i.e., 
interactions with their
environment) but that  remain polynomial on the internal work  
 performed between any two  external activities. 
Our definition of the extended  class, \eshopi{\inpchans}, relies on the 
notion of input channel~--- a channel that is used in a process only
in input. This allows us to have more flexibility in the permitted
forms of copying. 
We have proposed \eshopi{\inpchans}\ because this class seems mathematically
simple and practically interesting. These claims, however, need 
to be sustained by
more evidence.
Furher, other refinements  of  \shopi\ are possible. 
Again, more experimentation with examples is
needed to understand where to focus attention.

Another question related to the interplay between internal and external
actions of processes is whether the polynomial bounds on internal
actions change when external actions are performed. 

Summarizing, we started with a question (``Can ICC be applied to process algebras?'')
and ended up with a positive answer and many more different questions. But this is a feature,
and not a bug.

\bibliographystyle{eptcs}
{\small \bibliography{main}}
\end{document}